\renewcommand{\epsilon}{\varepsilon}
\renewcommand{\phi}{\varphi}
\renewcommand{\rho}{\varrho}
\renewcommand{\tau}{\uptau}
\newtheorem{Def}{Definition}[section]
\newenvironment{definition}{\begin{Def} \rm}{\end{Def}}
\newtheorem{lemma}[Def]{Lemma}
\newtheorem{proposition}[Def]{Proposition}
\newtheorem{theorem}[Def]{Theorem}
\newtheorem{example}[Def]{Example}
\newtheorem{remark}[Def]{Remark}
\newcommand{\komma}{,\hspace{0.6em}}
\renewcommand{\leq}{\leqslant}
\renewcommand{\geq}{\geqslant}
\renewcommand{\emptyset}{\varnothing}
\newcommand{\Naturals}{{\mathbb N}}
\newcommand{\Integers}{{\mathbb Z}}
\newcommand{\Reals}{{\mathbb R}}
\newcommand{\abs}[1]{{\left| #1 \right|}}
\newcommand{\CircleGroup}{{\mathbf C}} 
\newcommand{\Aut}[1]{{\operatorname{Aut}(#1)}}
\newcommand{\Circ}[1]{\operatorname{Circ}(#1)}
\newcommand{\U}{\text{\rm U}}
\renewcommand{\O}{\text{\rm O}}
\newcommand{\SO}{\text{\rm SO}}
\newcommand{\PSO}{\text{\rm PSO}}
\newcommand{\cc}{^{\bot\bot}}
\renewcommand{\c}{^\bot}
\newcommand{\indistinguishable}{\mathbin{\equiv}}
\newcommand{\herm}[2]{(#1,#2)}
\newcommand{\lin}[1]{[#1]}
\newcommand{\withoutzero}{^{\raisebox{0.2ex}{\scalebox{0.4}{$\bullet$}}}}
\newenvironment{smm}{\begin{pmatrix}}{\end{pmatrix}}
\newcommand{\similar}{\mathbin{\approx}}
\newcommand{\congr}{\mathbin{\theta}}
\begin{document}

\title{Gradual transitivity \\ in orthogonality spaces of finite rank}

\author{Thomas Vetterlein}

\affil{\footnotesize Department of Knowledge-Based Mathematical Systems \\
Johannes Kepler University Linz \\
Altenberger Stra\ss{}e 69, 4040 Linz, Austria \\
{\tt Thomas.Vetterlein@jku.at}}

\date{\today}

\maketitle

\begin{abstract}\parindent0pt\parskip1ex

An orthogonality space is a set together with a symmetric and irreflexive binary relation. Any linear space equipped with a reflexive and anisotropic inner product provides an example: the set of one-dimensional subspaces together with the usual orthogonality relation is an orthogonality space. We present simple conditions to characterise the orthogonality spaces that arise in this way from finite-dimensional Hermitian spaces.

Moreover, we investigate the consequences of the hypothesis that an orthogonality space allows gradual transitions between any pair of its elements. More precisely, given elements $e$ and $f$, we require a homomorphism from a divisible subgroup of the circle group to the automorphism group of the orthogonality space to exist such that one of the automorphisms maps $e$ to $f$, and any of the automorphisms leaves the elements orthogonal to $e$ and $f$ fixed. We show that our hypothesis leads us to positive definite quadratic spaces. By adding a certain simplicity condition, we furthermore find that the field of scalars is Archimedean and hence a subfield of the reals.

\mbox{}\vspace{-2ex}

\end{abstract}

\vspace{2ex}

\section{Introduction}
\label{sec:introduction}

An orthogonality space is a set endowed with a binary relation that is supposed to be symmetric and irreflexive. The notion was proposed in the 1960s by David Foulis and his collaborators \cite{Dac,Wlc}. Their motivation may be seen as part of the efforts to characterise the basic model used in quantum physics: the Hilbert space. The strategy consists in reducing the structure of this model to the necessary minimum. Compared to numerous further approaches that have been proposed with a similar motivation \cite{EGL1, EGL2}, we may say that Foulis's concept tries to exhaust the limits of abstraction, focussing solely on the relation of orthogonality. The prototypical example of an orthogonality space is the projective Hilbert space together with usual orthogonality relation. Just one aspect of physical modelling is this way taken into account -- the distinguishability of observation results.

We have dealt with the problem of characterising the complex Hilbert spaces as orthogonality spaces in our recent work \cite{Vet1,Vet2}. The idea was to make hypotheses on the existence of certain symmetries. In the infinite-dimensional case, just a few simple assumptions led to success \cite{Vet2}, whereas in the finite-dimensional case, the procedure was considerably more involved \cite{Vet1}.

In the present paper, we first of all point out a straightforward way of limiting the discussion to inner-product spaces. We deal here with the finite-dimensional case, that is, we assume all orthogonality spaces to have a finite rank. We introduce the notion of linearity and establish that any linear orthogonality space of a finite rank $\geq 4$ arises from an (anisotropic) Hermitian space over some skew field.

On this basis, we are furthermore interested in finding conditions implying that the skew field is among the classical ones. However, to determine within our framework the characteristic properties of, say, the field of complex numbers is difficult and we are easily led to the choice of technical, physically poorly motivated hypotheses. Rather than tailoring conditions to the aim of characterising a particular field of scalars, we focus in this work on an aspect whose physical significance is not questionable: we elaborate on the principle of smooth transitions between states. A postulate referring to this aspect might actually be typical for any approach to interpret the quantum physical formalism; cf., e.g., \cite{Har}. Our condition looks as follows. Let $e$ and $f$ be distinct elements of an irredundant orthogonality space. Then we suppose that an injective homomorphism from a subgroup of the abelian group of unit complex numbers to the group of automorphisms exists, the action being transitive on the closure of $e$ and $f$ and fixing elements orthogonal to $e$ and~$f$.

The complex Hilbert space does not give rise to an example of the orthogonality spaces considered here, but the real Hilbert space does. The natural means of visualising matters is an $n$-sphere, which nicely reflects the possibility of getting continuously from any point to any other one by means of a rotation, in a way that anything orthogonal to both is left at rest. As the main result of this contribution, we establish that any linear orthogonality space of finite rank that fulfils the afore-mentioned hypothesis regarding the existence of automorphisms arises from a positive definite quadratic space. We furthermore subject the orthogonality space to a simplicity condition, according to which there are no non-trivial quotients compatible with the automorphisms in question. We show that the field of scalars is then embeddable into the reals.

The paper is organised as follows. In Section~\ref{sec:preliminaries}, we recall the basic notions used in this work and we compile some basic facts on inner-product spaces and the orthogonality spaces arising from them. In Section~\ref{sec:Hermitian-spaces}, we introduce linear orthogonality spaces; we show that the two simple defining conditions imply that an orthogonality space arises from a Hermitian space over some skew field. In Section~\ref{sec:ordered-fields}, we formulate the central hypothesis with which we are concerned in this paper, the condition that expresses, in the sense outlined above, the gradual transitivity of the space. We show that, as a consequence, the skew field is commutative, its involution is the identity, and it admits an order. The subsequent Section~\ref{sec:circulation-group} is devoted to the group generated by those automorphisms that occur in our main postulate. In Section~\ref{sec:spaces-over-subfields-of-R}, we finally show that the exclusion of certain quotients of the orthogonal space implies that the ordered field actually embeds into $\Reals$. An outlook on possible continuations of this work can be found in the concluding Section~\ref{sec:conclusion}.

\section{Orthogonality spaces}
\label{sec:preliminaries}

We investigate in this paper relational structures of the following kind.

\begin{definition} \label{def:OS}
An {\it orthogonality space} is a non-empty set $X$ equipped with a symmetric, irreflexive binary relation $\perp$, called the {\it orthogonality relation}.

We call $n \in \Naturals$ the {\it rank} of $(X, \perp)$ if $X$ contains $n$ but not $n + 1$ mutually orthogonal elements. If $X$ contains $n$ mutually orthogonal elements for any $n \in \Naturals$, then we say that $X$ has infinite rank.
\end{definition}

This definition was proposed by David Foulis; see, e.g., \cite{Dac,Wlc}. The idea of an abstract orthogonality relation has been taken up by several further authors \cite{Mac,Fin,Pul,HePu,Rod,Bru}, although definitions sometimes differ from the one we use here. It should be noted that the notion of an orthogonality space is very general; in fact, orthogonality spaces are essentially the same as undirected graphs.

Orthogonality space naturally arise from inner-product spaces. We shall compile the necessary background material; for further information, we may refer, e.g., to \cite{Gro,Piz,Sch}.

By a {\it $\star$-sfield}, we mean a skew field (division ring) $K$ together with an involutorial antiautomorphism $^\star \colon K \to K$. We denote the centre of $K$ by $Z(K)$ and we let $U(K) = \{ \epsilon \in K \colon \epsilon \epsilon^\star = 1 \}$ be the set of unit elements of $K$.

Let $H$ be a (left) linear space over the $\star$-sfield $K$. Then a {\it Hermitian form} on $H$ is a map $\herm{\cdot}{\cdot} \colon H \times H \to K$ such that, for any $u, v, w \in H$ and $\alpha, \beta \in K$, we have
\begin{align*}
& \herm{\alpha u + \beta v}{w} \;=\; \alpha \, \herm{u}{w} + \beta \, \herm{v}{w}, \\
& \herm{w}{\alpha u + \beta v} \;=\;
                               \herm{w}{u} \, \alpha^\star + \herm{w}{v} \, \beta^\star, \\
& \herm{u}{v} \;=\; \herm{v}{u}^\star.
\end{align*}
The form is called {\it anisotropic} if $\herm{u}{u} = 0$ holds only if $u = 0$.

By a {\it Hermitian space}, we mean a linear space $H$ endowed with an anisotropic Hermitian form. If the $\star$-sfield $K$ is commutative and the involution $\star$ is the identity, then we refer to $H$ as a {\it quadratic space}. We moreover recall that a field $K$ is {\it ordered} if $K$ is equipped with a linear order such that (i)~$\alpha \leq \beta$ implies $\alpha + \gamma \leq \beta + \gamma$ and (ii)~$\alpha, \beta \geq 0$ implies $\alpha \beta \geq 0$. If $K$ can be made into an ordered field, $K$ is called {\it formally real}. If $K$ is an ordered field and we have that $\herm u u > 0$ for any $u \in H \setminus \{0\}$, then $H$ is called {\it positive definite}.

As usual, we write $u \perp v$ for $\herm u v = 0$, where $u, v \in H$. Applied to subsets of $H$, the relation $\perp$ is understood to hold elementwise. Moreover, we write $\lin{u_1, \ldots, u_k}$ for the subspace spanned by non-zero vectors $u_1, \ldots, u_k \in H$. For a subspace $E$ of $H$, we write $ E\withoutzero = E \setminus \{0\}$ and we define $P(E) = \{ \lin u \colon u \in E\withoutzero \}$. That is, $P(H)$ is the (base set of the) projective space associated with $H$.

We may now indicate our primary example of orthogonality spaces.

\begin{example}
Let $H$ be a Hermitian space. Because the form is reflexive, $\lin u \perp \lin v$ is equivalent to $\lin v \perp \lin u$ for any $u, v \in H\withoutzero$, and because the form is anisotropic, $\lin u \perp \lin u$ does not hold for any $u \in H\withoutzero$. In other words, the orthogonality relation on $P(H)$ is symmetric and irreflexive and hence makes $P(H)$ into an orthogonality space.

If $H$ is finite-dimensional, the dimension of $H$ coincides with the rank of $(P(H), \perp)$. If $H$ is infinite-dimensional, $(P(H), \perp)$ has infinite rank.
\end{example}

We call an orthogonality space $X$ {\it irredundant} if, for any $e, f \in X$, $\,\{e\}\c = \{f\}\c$ implies $e = f$. For example, for any Hermitian space $H$, $(P(H), \perp)$ is irredundant. For the reasons explained in the following remark, focusing on orthogonality spaces with this property is no serious restriction.

\begin{remark} \label{rem:irredundance}
Let $(X, \perp)$ be an orthogonality space. If $X$ is not irredundant, there are distinct elements that are, by means of the orthogonality relation, indistinguishable. Roughly speaking, $X$ then arises from an irredundant space simply by multiplying some of its elements.

Indeed, for $e, f \in X$, define $e \indistinguishable f$ to hold if $\{e\}\c = \{f\}\c$. Then $\indistinguishable$ is an equivalence relation. Moreover, $e \indistinguishable e'$ and $f \indistinguishable f'$ imply that $e \perp f$ is equivalent to $e' \perp f'$. Thus the quotient set $X/{\indistinguishable}$ can be made into an orthogonality space, where we have, for any $e, f \in X$, ${e/{\indistinguishable}} \perp {f/{\indistinguishable}}$ if and only if $e \perp f$. By construction, $X/{\indistinguishable}$ is irredundant.

We conclude that, given an orthogonality space that is not irredundant, we can easily switch to an irredundant one whose structure can be considered as essentially the same.
\end{remark}

Both orthogonality spaces and Hermitian spaces can be dealt with by lattice-theoretic means.

For a subset $A$ of an orthogonality space $(X, \perp)$, we let
\[ A\c \;=\; \{ e \in X \colon e \perp A \}, \]
where it is again understood that the orthogonality relation is applied to subsets of $X$ elementwise. The map ${\mathcal P}(X) \to {\mathcal P}(X) \komma A \mapsto A\cc$ is a closure operator \cite{Ern}. If $A\cc = A$, we say that $A$ is {\it orthoclosed} and we denote the set of all orthoclosed subsets of $X$ by ${\mathcal C}(X, \perp)$. We partially order ${\mathcal C}(X, \perp)$ by set-theoretical inclusion and equip ${\mathcal C}(X, \perp)$ with the operation $\c$. In this way, we are led to an ortholattice, from which $(X, \perp)$ can in certain cases be recovered.

Following Roddy \cite{Rod}, we call an orthogonality space {\it point-closed} if, for any $e \in X$, $\{e\}$ is orthoclosed.

\begin{proposition} \label{prop:OS-ortholattice}
${\mathcal C}(X, \perp)$ is a complete ortholattice.

Moreover, $(\{ \{e\}\cc \colon e \in X \}, \;\perp)$ is an orthogonality space and the map $X \to \{ \{e\}\cc \colon e \in X \} \komma e \mapsto \{e\}\cc$ is orthogonality-preserving. If $(X,\perp)$ is point-closed, then the map $X \to (\{ \{e\} \colon e \in X \}, \perp) \komma e \mapsto \{e\}$ is an isomorphism between $(X, \perp)$ and the set of atoms of ${\mathcal C}(X, \perp)$ endowed with the inherited orthogonality relation.
\end{proposition}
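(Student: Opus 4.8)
The proof proposal for Proposition~\ref{prop:OS-ortholattice} falls naturally into three parts, matching the three assertions in the statement.

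\textbf{Step 1: ${\mathcal C}(X,\perp)$ is a complete ortholattice.} The map $A \mapsto A\cc$ is a closure operator, and as is standard for the Galois connection induced by a symmetric relation, the fixed points---the orthoclosed sets---are closed under arbitrary intersections. Hence ${\mathcal C}(X,\perp)$, ordered by inclusion, is a complete lattice: the meet of a family is its intersection, and the join is the orthoclosure of the union. I would then check that $\c$ is an orthocomplementation on this lattice. Order-reversal, $A \subseteq B \Rightarrow B\c \subseteq A\c$, is immediate from the definition; the involution law $A\cc{}\c = A\c$ for orthoclosed $A$ follows from the general closure-operator identities $A \subseteq A\cc$ and $A\c = A\cc{}\c$ applied carefully, keeping in mind that on ${\mathcal C}(X,\perp)$ we have $A\cc = A$. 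The complementation laws $A \wedge A\c = \emptyset$ (equivalently $\{0\}$-like bottom, here the empty set or $\{$elements orthogonal to everything$\}$; I should take care that $\emptyset\cc$ is the bottom) and $A \vee A\c = X$ amount to: the only element orthogonal to both $A$ and $A\c$ lies in $A\c \cap A\cc = A\c \cap A$, which by irreflexivity of $\perp$ can only contain elements orthogonal to themselves---so it is empty; and dually for the join.

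\textbf{Step 2: $(\{\{e\}\cc : e \in X\}, \perp)$ is an orthogonality space and $e \mapsto \{e\}\cc$ preserves orthogonality.} Symmetry and irreflexivity of $\perp$ on this subfamily are inherited, provided I read $\{e\}\cc \perp \{f\}\cc$ elementwise; irreflexivity needs that $\{e\}\cc \not\perp \{e\}\cc$, which holds because $e \in \{e\}\cc$ and $e \not\perp e$. For the preservation claim I would show $e \perp f \iff \{e\}\cc \perp \{f\}\cc$. The direction $\Leftarrow$ is trivial since $e \in \{e\}\cc$ and $f \in \{f\}\cc$. For $\Rightarrow$: if $e \perp f$ then $f \in \{e\}\c$, so $\{e\}\cc \subseteq \{f\}\c$ by antitonicity, i.e.\ every element of $\{e\}\cc$ is orthogonal to $f$; applying $\c$ again, $\{f\}\cc \subseteq \{e\}\ccc = \{e\}\c$, which is exactly $\{e\}\cc \perp \{f\}\cc$ elementwise.

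\textbf{Step 3: the point-closed case.} Assume $\{e\}\cc = \{e\}$ for all $e$. First, $e \mapsto \{e\}$ is then injective: $\{e\} = \{f\}$ forces $e = f$ trivially, but I should also note that the map is well-defined into the atoms. That each $\{e\}$ is an atom of ${\mathcal C}(X,\perp)$: it is orthoclosed by hypothesis and nonempty, and any orthoclosed $B$ with $\emptyset \subsetneq B \subsetneq \{e\}$ is impossible since $\{e\}$ has no intermediate subsets. Conversely every atom $B$ is of this form: pick $e \in B$; then $\{e\}\cc \subseteq B\cc = B$ (here $\{e\}\cc = \{e\}$ already), and minimality of the atom $B$ together with $\{e\} \neq \emptyset$ gives $B = \{e\}$. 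Finally, the bijection $e \mapsto \{e\}$ preserves and reflects orthogonality by the elementwise reading, so it is an isomorphism of orthogonality spaces onto the atoms with the induced relation.

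The routine content is considerable but each step is short; the only place demanding genuine care is Step~1, specifically getting the bottom element of ${\mathcal C}(X,\perp)$ right (it is $\emptyset\cc$, which may be strictly larger than $\emptyset$ when $X$ contains an element orthogonal to all others---though irreflexivity rules out such an element being orthogonal to itself, so in fact $\emptyset\cc = \emptyset$ here) and verifying the complementation identities $A \wedge A\c = \emptyset$ and $A \vee A\c = X$ cleanly from irreflexivity. I expect that to be the main, if modest, obstacle; everything else is a direct unwinding of the Galois-connection formalism recalled just before the statement.
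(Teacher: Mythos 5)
Your proposal is correct and follows essentially the same route as the paper: the complete lattice structure comes from the closure operator, $\c$ is checked to be an order-reversing involutive complementation (with irreflexivity ruling out elements of $A \cap A\c$), the second part rests on the equivalence $e \perp f \iff \{e\}\cc \perp \{f\}\cc$, and the point-closed case reduces to observing that the singletons are exactly the atoms. The paper's proof is just a terser version of the same argument; your extra care about the bottom element $\emptyset\cc = \emptyset$ and about surjectivity onto the atoms is sound and fills in details the paper leaves implicit.
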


\begin{proof}
The collection of closed subsets of a closure space forms a complete lattice and this fact applies to ${\mathcal C}(X, \perp)$. Moreover, $A\c$ is clearly a complement of an $A \in {\mathcal C}(X, \perp)$ and $\c \colon {\mathcal C}(X, \perp) \to {\mathcal C}(X, \perp)$ is order-reversing as well as involutive. This shows the first part.

For any $e, f \in X$, we have $\{e\}\cc \perp \{f\}\cc$ if and only if $e \perp f$. It follows that $(\{ \{e\}\cc \colon e \in X \}, \perp)$ is an orthogonality space and the assignment $e \mapsto \{e\}\cc$ is orthogonality-preserving. Moreover, if $\{e\} = \{e\}\cc$ holds for any $e \in X$, then ${\mathcal C}(X, \perp)$ is atomistic, the atoms being the singleton subsets. The second part follows as well.
\end{proof}

The correspondence between an orthogonality space $(X, \perp)$ and its associated ortholattice ${\mathcal C}(X, \perp)$ extends as follows to automorphisms. Here, an {\it automorphism} of $(X, \perp)$ is a bijection $\phi$ of $X$ such that, for any $x, y \in X$, $\;x \perp y$ if and only if $\phi(x) \perp \phi(y)$. We denote the automorphism group of $(X, \perp)$ by $\Aut{X,\perp}$. Moreover, the group of automorphisms of the ortholattice ${\mathcal C}(X, \perp)$ is denoted by $\Aut{{\mathcal C}(X, \perp)}$.

\begin{proposition} \label{prop:automorphisms-OS-ortholattice}
Let $\phi$ be an automorphism of the orthogonality space $(X, \perp)$. Then
\begin{equation} \label{fml:automorphisms-OS-ortholattice}
\bar\phi \colon {\mathcal C}(X, \perp) \to {\mathcal C}(X, \perp) \komma A \mapsto \{ \phi(e) \colon e \in A\}
\end{equation}
is an automorphism of the ortholattice ${\mathcal C}(X, \perp)$.

If $(X,\perp)$ is point-closed, then $\Aut{X,\perp} \to \Aut{{\mathcal C}(X, \perp)} \komma \phi \mapsto \bar\phi$ is an isomorphism.
\end{proposition}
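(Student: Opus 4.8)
The plan is to prove the first assertion by checking directly that $\bar\phi$ is well defined, bijective, order preserving, and compatible with the orthocomplementation, and then to analyse the group homomorphism $\phi \mapsto \bar\phi$ in the point-closed case using the atomicity description from Proposition~\ref{prop:OS-ortholattice}.

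First I would record the one structural fact that does all the work: $\phi$ commutes with the orthogonality operator on power sets. Since $\phi$ is a bijection of $X$ and both $\phi$ and $\phi^{-1}$ preserve $\perp$, we have for every $A \subseteq X$ that $e \perp A$ holds if and only if $\phi(e) \perp \phi(A)$, whence $\phi(A\c) = \phi(A)\c$. Applying this twice gives $\phi(A\cc) = \phi(A)\cc$, so $\phi$ carries orthoclosed sets to orthoclosed sets; thus $\bar\phi$ really maps ${\mathcal C}(X, \perp)$ into itself, and $\overline{\phi^{-1}}$ is its two-sided inverse. Being induced by a bijection of $X$, $\bar\phi$ both preserves and reflects set-theoretic inclusion, hence is an order automorphism of the complete lattice ${\mathcal C}(X, \perp)$ and therefore preserves arbitrary joins and meets. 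Finally $\bar\phi(A\c) = \phi(A\c) = \phi(A)\c = \bar\phi(A)\c$, so $\bar\phi$ commutes with the orthocomplementation as well and is an automorphism of the ortholattice.

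For the second part, the assignment $\phi \mapsto \bar\phi$ is clearly a group homomorphism, as $\overline{\phi \circ \psi} = \bar\phi \circ \bar\psi$ and $\overline{\mathrm{id}_X} = \mathrm{id}$. Injectivity is where point-closedness enters: if $\bar\phi = \bar\psi$, then $\overline{\psi^{-1}\circ\phi} = \mathrm{id}$, and evaluating at the singletons $\{e\}$, which are orthoclosed precisely because $(X,\perp)$ is point-closed, yields $\{\psi^{-1}(\phi(e))\} = \{e\}$ for every $e$, so $\phi = \psi$. For surjectivity, let $\Psi$ be any automorphism of ${\mathcal C}(X, \perp)$. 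By Proposition~\ref{prop:OS-ortholattice}, point-closedness makes ${\mathcal C}(X, \perp)$ atomistic with the singletons as its atoms; a lattice automorphism permutes atoms, so there is a bijection $\phi$ of $X$ with $\Psi(\{e\}) = \{\phi(e)\}$ for all $e \in X$. Since $e \perp f$ is equivalent to $\{f\} \leq \{e\}\c$ in ${\mathcal C}(X, \perp)$ and $\Psi$ is an order automorphism commuting with $\c$, we obtain $e \perp f$ iff $\{\phi(f)\} \leq \{\phi(e)\}\c$ iff $\phi(f) \perp \phi(e)$, so $\phi \in \Aut{X,\perp}$. Lastly $\bar\phi$ and $\Psi$ are join-preserving maps that agree on all atoms, and every orthoclosed set is the join of the atoms below it, so $\bar\phi = \Psi$.

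The verifications that $\bar\phi$ is a bijection and an ortholattice morphism are routine manipulations of the identity $\phi(A\c) = \phi(A)\c$. The step to watch is the passage between atoms of ${\mathcal C}(X, \perp)$ and points of $X$ in both the injectivity and the surjectivity arguments: it is exactly point-closedness, via Proposition~\ref{prop:OS-ortholattice}, that identifies the atoms of the ortholattice with the singleton subsets, and hence lets lattice automorphisms be tracked on points. Beyond being careful that the bijection extracted from $\Psi$ on atoms reproduces $\bar\phi$ on all of ${\mathcal C}(X, \perp)$ — which follows from atomisticity together with join preservation — I do not expect a genuine obstacle.
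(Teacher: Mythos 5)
Your proof is correct and follows essentially the same route as the paper, which merely sketches it: the first part via the identity $\phi(A\c)=\phi(A)\c$, and the second part by using point-closedness and the atomisticity of ${\mathcal C}(X,\perp)$ to identify lattice automorphisms with orthogonality-preserving permutations of the atoms, i.e.\ of the singletons. You have simply filled in the routine verifications that the paper leaves to the reader.
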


\begin{proof}
The first part is clear. If the singleton subsets are orthoclosed, then ${{\mathcal C}(X,\perp)}$ is atomistic and consequently, every automorphism is induced by a unique orthogonality-preserving permutation of the atoms. The second part follows as well.
\end{proof}

We now turn to the correspondence between Hermitian spaces and ortholattices; see, e.g., \cite[Section 34]{MaMa}.

For a subset $E$ of a Hermitian space $H$, we define $E\c = \{ u \in H \colon u \perp E \}$. Let $H$ be finite-dimensional. Then $E \subseteq H$ is a subspace of $H$ if and only if $E = E\cc$. We partially order the set ${\mathcal L}(H)$ of subspaces of $H$ w.r.t.\ the set-theoretic inclusion and we endow ${\mathcal L}(H)$ with the complementation function $\c$. Then ${\mathcal L}(H)$ is a complete ortholattice.

We recall that a lattice with $0$ is called {\it atomistic} if each element is the join of atoms. Moreover, we call an ortholattice {\it irreducible} if it is not isomorphic to the direct product of two non-trivial ortholattices. Here, an ortholattice is considered trivial if consisting of a single element.

\begin{theorem} \label{thm:orthomodular-space}
Let $H$ be a Hermitian space of finite dimension $m$. Then ${\mathcal L}(H)$ is an irreducible, atomistic, modular ortholattice of length $m$.

Conversely, let $L$ be an irreducible, atomistic, modular ortholattice of finite length $m \geq 4$. Then there is a $\star$-sfield $K$ and an $m$-dimensional Hermitian space $H$ over $K$ such that $L$ is isomorphic to ${\mathcal L}(H)$.
\end{theorem}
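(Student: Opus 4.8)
The plan is to prove the two directions separately; the first is a routine verification, and the second reduces, via two classical coordinatisation theorems, essentially to bookkeeping.

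\emph{First direction.} Let $H$ be a Hermitian space of finite dimension $m$. That ${\mathcal L}(H)$ is modular is the usual dimension-count proof of the modular law for subspaces; it is atomistic because each subspace is the join of the one-dimensional subspaces it contains, and it has length $m$ because maximal chains are complete flags $0 < \lin{u_1} < \lin{u_1, u_2} < \cdots < H$. For irreducibility, a decomposition ${\mathcal L}(H) \cong L_1 \times L_2$ into non-trivial factors would force a direct sum $H = H_1 \oplus H_2$ with $H_1, H_2 \neq 0$ in which every subspace has the shape $E_1 \oplus E_2$; the line spanned by $u_1 + u_2$ with $0 \neq u_i \in H_i$ is not of this shape. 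Finally $E \mapsto E\c$ is order-reversing, and since the form is anisotropic and $H$ is finite-dimensional the form is non-degenerate, so $E \cap E\c = 0$ and $\dim E + \dim E\c = m$, whence $E \vee E\c = H$ and $E\cc = E$; thus $\c$ is an orthocomplementation, and ${\mathcal L}(H)$ is an irreducible, atomistic, modular ortholattice of length $m$.

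\emph{Second direction, Step 1: coordinatise the lattice.} Forgetting $\c$, the lattice $L$ is an irreducible, atomistic, modular lattice of finite length $m \geq 4$, that is, the subspace lattice of a projective geometry of projective dimension $m - 1 \geq 3$. Since the dimension is at least $3$, this geometry is automatically Desarguesian, so the fundamental theorem of projective geometry yields a skew field $K$, an $m$-dimensional left $K$-vector space $V$, and a lattice isomorphism $L \cong {\mathcal L}(V)$; see \cite[Section 34]{MaMa}.

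\emph{Second direction, Step 2: represent the orthocomplementation by a form.} Transporting $\c$ to ${\mathcal L}(V)$ gives an order-reversing involution of ${\mathcal L}(V)$, i.e.\ a polarity of the projective geometry. By the representation theory of correlations -- Birkhoff--von Neumann, refined by Baer -- such a polarity is induced by a non-degenerate reflexive sesquilinear form: there are an involutory antiautomorphism $^\star$ of $K$ and a $^\star$-sesquilinear form $h$ on $V$ with $E\c = \{ v \in V \colon h(v,u) = 0 \text{ for all } u \in E \}$, reflexivity of $h$ corresponding to $\c$ being involutive. Because $\c$ is an orthocomplementation, $\lin{v} \wedge \lin{v}\c = 0$, and hence $h(v,v) \neq 0$, for every non-zero $v$; this rules out the alternating alternative of Birkhoff--von Neumann, lets us rescale $h$ to a genuine Hermitian form in the sense of Section~\ref{sec:preliminaries}, and shows at once that $h$ is anisotropic, for $h(v,v) = 0$ would put $v$ into $\lin{v} \cap \lin{v}\c = 0$. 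Then $H = (V, h)$ is an $m$-dimensional Hermitian space over the $\star$-sfield $K$ with $L \cong {\mathcal L}(H)$.

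\emph{Where the work is.} There is no genuinely new content: everything rests on the two cited classical theorems, and the hypotheses are exactly what is needed to invoke them -- ``irreducible'' excludes a product decomposition and so isolates a single projective space; ``atomistic, modular, finite length'' is finite-dimensional projective geometry; and ``length $\geq 4$'' forces projective dimension $\geq 3$ and thereby the Desargues property, without which coordinatisation over a skew field may fail (non-Desarguesian planes). The steps requiring care are the passage between orthocomplementations of $L$ and polarities of the geometry, and the refinement of the reflexive sesquilinear form to a Hermitian one together with the verification of anisotropy; here the main annoyance is conventions (left versus right linearity, the precise sesquilinearity and Hermitian identities, the placement of the scaling factor) rather than any real obstacle.
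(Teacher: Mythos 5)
The paper offers no proof of this theorem: it is quoted as a classical result with a pointer to \cite[Section 34]{MaMa}. Your argument is correct and is precisely the standard one behind that citation --- the routine direct verification for ${\mathcal L}(H)$, and, for the converse, coordinatisation of the irreducible atomistic modular lattice of length $m \geq 4$ as ${\mathcal L}(V)$ via the fundamental theorem of (Desarguesian) projective geometry, followed by the Birkhoff--von Neumann/Baer representation of the transported orthocomplementation by a reflexive sesquilinear form, which the condition $E \wedge E\c = 0$ lets you rescale to an anisotropic Hermitian form.
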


A linear operator $U \colon H \to H$ of a Hermitian space $H$ is called {\it unitary} if $U$ is a linear isomorphism such that $\herm{U(x)}{U(y)} = \herm x y$ for any $x, y \in H$. We denote the group of unitary operators by $\U(H)$ and its identity by $I$. Furthermore, we denote the group of automorphisms of the ortholattice ${\mathcal L}(H)$ by $\Aut{{\mathcal L}(H)}$.

The relationship between the automorphisms of a Hermitian space $H$ and its subspace ortholattice ${\mathcal L}(H)$ is described by Piron's version of Wigner's Theorem \cite[Thm.~3.28]{Pir}; see also \cite{May}. We shall be interested only in those automorphisms of ${\mathcal L}(H)$ that are induced by linear operators.

For a subspace $F$ or $H$, we denote by $[0,F]$ the interval of ${\mathcal L}(H)$ consisting of all subspaces contained in $F$.

\begin{theorem} \label{thm:Wigner}
Let $H$ be a Hermitian space of finite dimension $\geq 3$. For any unitary operator $U$ on $H$, the map
\[ \lambda_U \colon {\mathcal L}(H) \to {\mathcal L}(H) \komma E \mapsto \{ U(x) \colon x \in E \} \]
is an automorphism of ${\mathcal L}(H)$. The map $\U(H) \to \Aut{{\mathcal L}(H)} \komma U \mapsto \lambda_U$ is a homomorphism, whose kernel is $\{ \epsilon I \colon \epsilon \in Z(K) \cap U(K) \}$.

Conversely, let $\lambda$ be an automorphism of ${\mathcal L}(H)$ and assume that there is an at least two-dimensional subspace $F$ such that $\lambda|_{[0,F]}$ is the identity. Then there is a unique unitary operator $U$ on $H$ such that $\lambda = \lambda_U$ and $U|_F$ is the identity.
\end{theorem}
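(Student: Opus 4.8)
The plan is to prove the two directions separately; the forward direction is routine and the converse rests on the Fundamental Theorem of Projective Geometry together with a normalization argument.

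For the forward direction, a unitary operator $U$ is in particular a linear automorphism of $H$, hence permutes the subspaces of $H$ preserving inclusions and dimensions, so $\lambda_U$ is a lattice automorphism of ${\mathcal L}(H)$; and since $\herm{U(x)}{U(y)} = \herm{x}{y}$, it preserves orthogonality of vectors, whence $\lambda_U(E\c) = \lambda_U(E)\c$ for every subspace $E$, so $\lambda_U$ is an ortholattice automorphism. That $U \mapsto \lambda_U$ is a homomorphism is immediate from $\lambda_{UV} = \lambda_U \circ \lambda_V$. For the kernel I would argue that $\lambda_U = \mathrm{id}$ forces $U(x) = \epsilon_x x$ for every $x \in H\withoutzero$; comparing $U(x + y)$ with $U(x) + U(y)$ on a pair of linearly independent vectors (available as $\dim H \geq 3$) shows $\epsilon_x$ is a constant $\epsilon$, so $U = \epsilon I$; linearity of $\epsilon I$ forces $\epsilon \in Z(K)$ and unitarity forces $\epsilon \epsilon^\star = 1$, while conversely every such $\epsilon I$ clearly lies in the kernel.

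For the converse, let $\lambda$ be an ortholattice automorphism of ${\mathcal L}(H)$ that restricts to the identity on $[0,F]$ with $\dim F \geq 2$. Forgetting the orthocomplementation, $\lambda$ is an automorphism of the subspace lattice of $H$, so since $\dim H \geq 3$ the Fundamental Theorem of Projective Geometry provides a bijective $\sigma$-semilinear map $g \colon H \to H$ for some automorphism $\sigma$ of $K$, with $\lambda(E) = g(E)$ for every subspace $E$, unique up to a nonzero scalar factor. Rescaling, one may arrange $g(e_1) = e_1$ for a fixed $e_1 \in F\withoutzero$ (at the cost of replacing $\sigma$ by an inner conjugate, still an automorphism of $K$). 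Choosing $e_2 \in F$ with $e_1, e_2$ linearly independent and exploiting that every one-dimensional subspace of the plane $\lin{e_1, e_2} \subseteq F$ is fixed by $\lambda$, the evaluation of $g$ on $e_2$, on $e_1 + e_2$, and on $e_1 + \alpha e_2$ for arbitrary $\alpha \in K$ forces first $g(e_2) = e_2$ and then $\sigma = \mathrm{id}$; hence $g$ is linear, and a similar elementary argument gives $g|_F = \mathrm{id}$. This same normalization yields uniqueness: any other linear map inducing $\lambda$ differs from $g$ by a fixed central scalar, which is forced to be $1$ by the fixing of $e_1$. It then remains to see that $g$ is unitary. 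Since $\lambda$ commutes with $\c$, the map $g$ preserves orthogonality of vectors, so the assignment $\herm{x}{y}' := \herm{g(x)}{g(y)}$ is again an anisotropic Hermitian form for the involution $\star$ (here linearity of $g$ enters) whose orthogonality relation coincides with that of $\herm{\cdot}{\cdot}$. Invoking the classical fact that an anisotropic Hermitian form on a space of dimension $\geq 3$ is determined, up to a scalar factor, by the orthogonality relation it induces, I would conclude $\herm{x}{y}' = c\,\herm{x}{y}$ for some $c \in Z(K)$ with $c^\star = c$; evaluating at $e_1$ and using $g(e_1) = e_1$ together with $\herm{e_1}{e_1} \neq 0$ gives $c = 1$, so $U := g$ is a unitary operator with $\lambda = \lambda_U$ and $U|_F = \mathrm{id}$. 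Uniqueness of $U$ reduces to the kernel computation above: a second solution $U'$ would make $U^{-1} U'$ a unitary operator inducing the identity on ${\mathcal L}(H)$, hence $U^{-1} U' = \epsilon I$ with $\epsilon \in Z(K) \cap U(K)$, and $U^{-1} U'|_F = \mathrm{id}$ forces $\epsilon = 1$.

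The main obstacle, I expect, is the final passage from ``preserves orthogonality'' to ``unitary''. The Fundamental Theorem is standard, though one must be careful with the left-module convention and with the distinction between automorphisms and antiautomorphisms of $K$, and the normalization steps are elementary; but the statement that the orthogonality relation determines an anisotropic Hermitian form up to a scalar is a genuinely non-trivial classical result, and it is precisely here --- as already in the Fundamental Theorem --- that the hypothesis $\dim H \geq 3$ cannot be dropped, since in dimension $2$ there are polarities not arising from any form of the given type.
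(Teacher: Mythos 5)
The paper does not prove this theorem at all: it is imported as a known result, namely Piron's version of Wigner's theorem, with a citation to \cite[Thm.~3.28]{Pir} and to \cite{May}. So there is no in-paper argument to compare yours against; what can be said is that your proposal is essentially the standard proof of that classical result, and I find it sound. The forward direction and the kernel computation (constancy of the scalars $\epsilon_x$ via linear independence, centrality from left-linearity, $\epsilon\epsilon^\star=1$ from unitarity) are correct. The converse correctly rests on two classical ingredients: the Fundamental Theorem of Projective Geometry, giving a $\sigma$-semilinear $g$ inducing $\lambda$, and the Birkhoff--von Neumann-type uniqueness statement that on a space of dimension $\geq 3$ an anisotropic Hermitian form is determined up to a scalar by its orthogonality relation. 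Your normalization is the right one: since every line of the plane $\lin{e_1,e_2}\subseteq F$ is fixed, rescaling to $g(e_1)=e_1$ forces $g(e_2)=e_2$ and then $\sigma(\alpha)=\alpha$ from $g(\alpha e_1+e_2)\in\lin{\alpha e_1+e_2}$, so $g$ is linear and fixes $F$ pointwise; the pulled-back form $\herm{g(\cdot)}{g(\cdot)}$ is then a genuine $\star$-Hermitian anisotropic form with the same orthogonality relation, the proportionality constant is central and symmetric, and evaluating at $e_1$ kills it. Your uniqueness argument via the kernel is also the right reduction. The only points deserving explicit care, which you already flag, are the left-module bookkeeping when rescaling a semilinear map (conjugating $\sigma$) and the fact that both cited classical theorems genuinely require dimension $\geq 3$; with references supplied for those two inputs, the proof is complete.
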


Given a Hermitian space $H$, we deal in this work with automorphisms of $(P(H), \perp)$ rather than ${\mathcal L}(H)$. We will modify Theorem \ref{thm:Wigner} accordingly. Note that the subspaces of $H$ and the orthoclosed subsets of $(P(H), \perp)$ are in a natural one-to-one correspondence; we may in fact identify the ortholattices ${\mathcal L}(H)$ and ${\mathcal C}(P(H), \perp)$. We may thus use Proposition \ref{prop:automorphisms-OS-ortholattice} to get the following further version of Wigner's Theorem, which in the case of a complex Hilbert space is actually Uhlhorn's Theorem \cite{Uhl}.

\begin{theorem} \label{thm:Wigner-for-OS}
Let $H$ be a Hermitian space of finite dimension $\geq 3$. For any unitary operator $U$, the map
\begin{equation} \label{fml:map-induced-by-unitary-operator}
\phi_U \colon P(H) \to P(H) \komma \lin x \mapsto \lin{U(x)}
\end{equation}
is an automorphism of $(P(H), \perp)$. The map $\U(H) \to \Aut{P(H),\perp} \komma U \mapsto \phi_U$ is a homomorphism, whose kernel is $\{ \epsilon I \colon \epsilon \in Z(K) \cap U(K) \}$.

Conversely, let $\phi$ be an automorphism of $(P(H), \perp)$ and assume that there is an at least two-dimensional subspace $F$ of $H$ such that $\phi(\lin x) = \lin x$ for any $x \in F\withoutzero$. Then there is a unique unitary operator $U$ on $H$ such that $\phi = \phi_U$ and $U|_F$ is the identity.
\end{theorem}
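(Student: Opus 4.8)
The plan is to transfer everything through the identification of ${\mathcal L}(H)$ with ${\mathcal C}(P(H), \perp)$ and then apply the two earlier theorems. First I would note that $(P(H), \perp)$ is point-closed: for $\lin x \in P(H)$ one has $\{\lin x\}\cc = P(\lin x\cc{}\c) = P(\lin x) = \{\lin x\}$, using anisotropy and finite-dimensionality (every subspace equals its double orthocomplement). Hence Proposition \ref{prop:automorphisms-OS-ortholattice} applies and gives a group isomorphism $\Aut{P(H),\perp} \to \Aut{{\mathcal C}(P(H), \perp)}$, $\phi \mapsto \bar\phi$. Under the identification ${\mathcal C}(P(H), \perp) \cong {\mathcal L}(H)$ (the orthoclosed subsets of $P(H)$ are exactly the sets $P(E)$ for $E$ a subspace, and $\c$ corresponds to $\c$), this becomes an isomorphism $\Aut{P(H),\perp} \to \Aut{{\mathcal L}(H)}$.

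For the first assertion, given a unitary $U$, the map $\lambda_U$ of Theorem \ref{thm:Wigner} is an automorphism of ${\mathcal L}(H)$; its restriction to atoms is precisely $\phi_U$ of \eqref{fml:map-induced-by-unitary-operator} under the identification of atoms $\lin x$ with one-dimensional subspaces. Since atoms are sent to atoms and the correspondence $\phi \leftrightarrow \bar\phi$ is bijective, $\phi_U$ is an automorphism of $(P(H),\perp)$ and $\overline{\phi_U} = \lambda_U$. Composing the homomorphism $U \mapsto \lambda_U$ of Theorem \ref{thm:Wigner} with the inverse of $\phi \mapsto \bar\phi$ shows $U \mapsto \phi_U$ is a homomorphism; its kernel is the same as that of $U \mapsto \lambda_U$, namely $\{\epsilon I \colon \epsilon \in Z(K) \cap U(K)\}$, because $\phi \mapsto \bar\phi$ is injective.

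For the converse, suppose $\phi \in \Aut{P(H),\perp}$ fixes $\lin x$ for every $x \in F\withoutzero$, with $\dim F \geq 2$. Then $\bar\phi \in \Aut{{\mathcal L}(H)}$ fixes every atom below $P(F)$, hence—being a lattice automorphism—fixes every element of the interval $[0, P(F)]$, i.e. $\bar\phi|_{[0,F]}$ is the identity. Theorem \ref{thm:Wigner} now yields a unique unitary $U$ with $\bar\phi = \lambda_U$ and $U|_F = I$; passing back to atoms gives $\phi = \phi_U$. Uniqueness of $U$ transfers from Theorem \ref{thm:Wigner} since $\phi$ determines $\bar\phi = \lambda_U$ and the stated normalisation $U|_F = I$ is retained. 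The only point requiring a little care—and the mild obstacle here—is checking that the identification ${\mathcal C}(P(H),\perp) = {\mathcal L}(H)$ is compatible with both the order and the orthocomplementation and that it carries atoms to one-dimensional subspaces; this is routine from anisotropy and finite-dimensionality but must be spelled out so that Theorems \ref{thm:orthomodular-space} and \ref{thm:Wigner} can be invoked verbatim.
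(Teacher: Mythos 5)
Your proposal is correct and follows essentially the same route as the paper: the paper's own proof likewise extends each automorphism of $(P(H),\perp)$ to ${\mathcal L}(H)$ via the identification of ${\mathcal C}(P(H),\perp)$ with ${\mathcal L}(H)$, invokes Proposition \ref{prop:automorphisms-OS-ortholattice} for the bijective correspondence of automorphism groups, and then transfers both directions of Theorem \ref{thm:Wigner}. You merely spell out the routine verifications (point-closedness, atoms corresponding to one-dimensional subspaces, the kernel and interval arguments) that the paper leaves implicit.
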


\begin{proof}
We may extend any automorphism $\phi$ of $(P(H), \perp)$ to all of ${\mathcal L}(H)$, defining $\bar\phi \colon {\mathcal L}(H) \to {\mathcal L}(H) \komma E \mapsto \bigvee \{ \phi(\lin x) \colon x \in E \}$. By Proposition \ref{prop:automorphisms-OS-ortholattice} and in view of the identification of ${\mathcal C}(P(H), \perp)$ with ${\mathcal L}(H)$, this assignment defines a one-to-one correspondence between the automorphisms of $(P(H),\perp)$ and ${\mathcal L}(H)$. Hence the assertions follow from Theorem \ref{thm:Wigner}.
\end{proof}

For a unitary operator $U$ of a Hermitian space $H$, $\phi_U$ will denote in the sequel the automorphism of $(P(H), \perp)$ induced by $U$ according to (\ref{fml:map-induced-by-unitary-operator}).

\section{The representation by Hermitian spaces}
\label{sec:Hermitian-spaces}

Our first aim is to identify finite-dimensional Hermitian spaces with special orthogonality spaces. In contrast to the procedure in \cite{Vet1}, we do not deal already at this stage with symmetries. We rather derive the structure of a Hermitian space on the basis of two first-order conditions.

Throughout the remainder of this paper, $(X, \perp)$ will always be an irredundant orthogonality space of finite rank. We will call $(X, \perp)$ {\it linear} if the following two conditions are fulfilled:
\begin{itemize}

\item[\rm (L$_1$)] Let $e \in X$. Then for any $f \neq e$ there is a $g \perp e$ such that $\{e,g\}\c = \{e,f\}\c$.

\item[\rm (L$_2$)] Let $e \in X$. Then for any $g \perp e$ there is a $f \neq e, g$ such that $\{e,g\}\c = \{e,f\}\c$.

\end{itemize}
Condition (L$_1$) says that the collection of elements orthogonal to distinct elements $e$ and $f$ can be specified in such a way that $f$ is replaced with an element orthogonal to~$e$. (L$_1$) can be seen as a version of orthomodularity; indeed, this property is among its consequences. But more is true; also atomisticity follows and thus (L$_1$) can be regarded as the key property for the representability of $X$ as a linear space.

Condition (L$_2$) can be regarded as a statement complementary to (L$_1$). Indeed, (L$_2$) says that the collection of elements orthogonal to orthogonal elements $e$ and $g$ can be specified in such a way that $g$ is replaced with a third element. We will actually need only the following immediate consequence of (L$_2$): $\{e,g\}\cc$, where $e \perp g$, is never a two-element set. As we will see below, a closely related property of $(X, \perp)$ is its irreducibility.

\begin{example} \label{ex:orthogonality-space-from-orthomodular-space-is-linear}
Let $H$ be a finite-dimensional Hermitian space. Then $(P(H), \perp)$ is linear. To see that $P(H)$ fulfils {\rm (L$_1$)}, let $x, y \in H\withoutzero$. Putting $z = y - \herm{y}{x}\herm{x}{x}^{-1} x$, we have $z \perp x$ and $\lin{x,y} = \lin{x,z}$. In particular then, $\{x,y\}\c = \{x,z\}\c$. Also condition {\rm (L$_2$)} is immediate. Indeed, if $x, y \in H\withoutzero$ such that $x \perp y$, we have that $\lin{x,y} = \lin{x,x+y}$. In particular then, $\{x,y\}\c = \{x,x+y\}\c$.
\end{example}

\begin{lemma} \label{lem:atomistic}
Let $(X, \perp)$ fulfil {\rm (L$_1$)}. Then $(X, \perp)$ is point-closed. In particular, ${{\mathcal C}(X, \perp)}$ is atomistic, the atoms being the singletons $\{e\}$, $e \in X$.

Moreover, the assignment $X \to {\mathcal C}(X, \perp) \komma e \mapsto \{e\}$ defines an isomorphism between $(X, \perp)$ and the set of atoms of ${\mathcal C}(X, \perp)$ endowed with the inherited orthogonality relation.
\end{lemma}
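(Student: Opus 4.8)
The plan is to reduce everything to the single claim that $(X,\perp)$ is point-closed, because the remaining two assertions are then immediate consequences of Proposition \ref{prop:OS-ortholattice}. So the only substantial work is to show that, assuming {\rm (L$_1$)}, we have $\{e\}\cc = \{e\}$ for every $e \in X$.

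Since the inclusion $\{e\} \subseteq \{e\}\cc$ holds in any orthogonality space, I would fix $e \in X$, take an arbitrary $f \in \{e\}\cc$, and argue that $f = e$. Suppose, for contradiction, that $f \neq e$. Then {\rm (L$_1$)}, applied to the distinct pair $e, f$, supplies an element $g$ with $g \perp e$ and $\{e,g\}\c = \{e,f\}\c$. The key observation is that $g$ is orthogonal to $f$ as well: indeed $g \perp e$ says $g \in \{e\}\c$, while $f \in \{e\}\cc$ means precisely that $f$ is orthogonal to every element of $\{e\}\c$, so $f \perp g$. Hence $g$ is orthogonal to both $e$ and $f$, i.e.\ $g \in \{e,f\}\c = \{e,g\}\c$, which gives $g \perp g$, contradicting the irreflexivity of $\perp$. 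Therefore $f = e$, and $(X,\perp)$ is point-closed.

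With point-closedness established, Proposition \ref{prop:OS-ortholattice} directly yields that ${\mathcal C}(X,\perp)$ is atomistic with the singletons $\{e\}$ as its atoms, and that $e \mapsto \{e\}$ is an isomorphism of $(X,\perp)$ onto the set of atoms of ${\mathcal C}(X,\perp)$ equipped with the inherited orthogonality relation; nothing beyond a citation is needed here.

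I do not anticipate a genuine obstacle, as the argument is short. The one delicate point is deploying {\rm (L$_1$)} in exactly the right way: recognising that the replacement element $g$ it produces is \emph{forced} to be orthogonal to both $e$ and $f$, which is what collapses the equality $\{e,g\}\c = \{e,f\}\c$ into the self-orthogonality $g \perp g$. It is also worth checking that irredundance of $X$ is not actually used for this particular statement — only irreflexivity is — even though irredundance remains part of the standing hypotheses.
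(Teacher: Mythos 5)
Your proof is correct and follows essentially the same route as the paper: apply (L$_1$) to a putative $f \in \{e\}\cc \setminus \{e\}$, observe that the resulting $g$ lies in $\{e,f\}\c = \{e,g\}\c$, and derive the contradiction $g \perp g$; the paper phrases the middle step as the set identity $\{e\}\c = \{e,f\}\c$ rather than your direct element-wise argument that $f \perp g$, but this is only a cosmetic difference. The second part is, in both cases, an immediate citation of Proposition~\ref{prop:OS-ortholattice}.
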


\begin{proof}
Let $e \in X$ and $f \in \{e\}\cc$. Then $\{f\}\cc \subseteq \{e\}\cc$ and hence $\{e\}\c \subseteq \{f\}\c$. Assume $e \neq f$. Then there is, by (L$_1$), a $g \perp e$ such that $\{e,g\}\c = \{e,f\}\c$. It follows that $g \in \{e\}\c = \{e,f\}\c = \{e,g\}\c$, a contradition. Hence $f = e$, and we conclude that $\{e\}\cc = \{e\}$.

The first part follows, the second part holds by Proposition \ref{prop:OS-ortholattice}.
\end{proof}

We call a subset $D$ of $X$ {\it orthogonal} if $D$ consists of pairwise orthogonal elements.

\begin{lemma} \label{lem:for-Dacey}
Let $(X, \perp)$ fulfil {\rm (L$_1$)}. Let $D \subseteq X$ be orthogonal and let $e \notin D\cc$. Then there is an $f \perp D$ such that $(D \cup \{e\})\cc = (D \cup \{f\})\cc$.
\end{lemma}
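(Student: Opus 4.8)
The plan is to prove Lemma~\ref{lem:for-Dacey} by induction on the cardinality of the finite orthogonal set $D$ (recall $X$ has finite rank, so $D$ is finite), reducing the general statement to the rank-$2$ case that is essentially condition (L$_1$). If $D = \emptyset$, then $D\cc = \emptyset$ (or the least closed set), so $e \notin D\cc$ forces $e \neq$ nothing and we may simply take $f = e$; more usefully, the base case to set up is $\abs D = 1$, say $D = \{d\}$. If $e \notin \{d\}\cc = \{d\}$ then $e \neq d$, and (L$_1$) applied to $d$ gives $g \perp d$ with $\{d,g\}\c = \{d,e\}\c$, hence $(D\cup\{g\})\cc = (D\cup\{e\})\cc$ with $g \perp D$, as required.

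For the inductive step, write $D = D' \cup \{d\}$ with $d \notin D'$ and $d \perp D'$. First I would dispose of a degenerate case: if $e \in (D')\cc$, then since $e \notin D\cc = (D')\cc \cap \{d\}\c$ we must have $e \not\perp d$; now work inside the orthocomplement of $D'$, i.e.\ consider that $e, d \in (D')\c$ with $d \not\perp e$, so $e \notin \{d\}\cc$ relative to the whole space, and (L$_1$) at $d$ yields $f \perp d$ with $\{d,f\}\c = \{d,e\}\c$. One then checks $f \perp D'$ (because $f \in \{d,f\}\cc = \{d,e\}\cc \subseteq (D')\c$, using $d, e \perp D'$) and that intersecting with $(D')\c$ preserves the equation of orthocomplements, giving $(D'\cup\{d,f\})\cc = (D'\cup\{d,e\})\cc$. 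In the main case $e \notin (D')\cc$, the induction hypothesis applied to $D'$ and $e$ produces $e' \perp D'$ with $(D' \cup \{e'\})\cc = (D' \cup \{e\})\cc$; it then suffices to handle the pair $\{d, e'\}$ inside $(D')\c$ and paste the result back, again invoking (L$_1$) after verifying $e' \notin \{d\}\cc$ (which follows since otherwise $e' \in \{d\} \subseteq D\cc$ would force $e \in D\cc$).

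The main obstacle I anticipate is the bookkeeping needed to justify that passing to the orthocomplement $(D')\c$ is ``faithful'' for the orthoclosure operator: concretely, that for subsets $A$ of $(D')\c$ one has $(D' \cup A)\cc = (D')\cc \vee A\cc$ in ${\mathcal C}(X,\perp)$, or at least that $(D' \cup A)\c = (D')\c \cap A\c$ and taking $\c$ twice behaves well. This requires knowing that ${\mathcal C}(X,\perp)$ is orthomodular — which is where (L$_1$) really earns its keep — so that the interval $[0,(D')\c]$ is itself an ortholattice and the closure within it matches the ambient closure relativised. Lemma~\ref{lem:atomistic} (atomisticity and point-closedness) will be used freely, and the finiteness of rank guarantees the induction terminates; with those in hand the remaining verifications that $f \perp D$ and that the two closed sets agree are routine manipulations with $\c$ and $\cc$. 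An alternative, slicker route worth mentioning is to invoke the known fact that (L$_1$) forces ${\mathcal C}(X,\perp)$ to be an atomistic orthomodular lattice and then recognise the statement as the Dacey/exchange property of such lattices (hence the label ``for-Dacey''); but I would favour the explicit inductive argument above since it keeps the proof self-contained within the first-order framework the paper has set up.
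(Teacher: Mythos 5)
Your core argument is sound and is essentially the paper's own proof in disguise: the paper runs the same iterated application of (L$_1$) as an explicit chain $e \mapsto e_1 \mapsto \cdots \mapsto e_k$ with $e_i \perp d_i$ and $\{d_i\}\vee\{e_{i-1}\}=\{d_i\}\vee\{e_i\}$, which is just your induction on $\abs{D}$ unrolled. Two points in your writeup need correcting, though. First, your ``degenerate case'' $e \in (D')\cc$ is vacuous: $D' \subseteq D$ gives $(D')\cc \subseteq D\cc$, so $e \notin D\cc$ already forces $e \notin (D')\cc$. That is fortunate, because your treatment of that case is inconsistent --- you assume $e \in (D')\cc$ and then assert $e \in (D')\c$, but these two sets are disjoint (an element of both would be orthogonal to itself), and the inclusion $\{d,e\}\cc \subseteq (D')\c$ you rely on there is false since $e$ need not be orthogonal to $D'$. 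You should simply delete the case.

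Second, and more importantly, you must not invoke orthomodularity of ${\mathcal C}(X,\perp)$: in the paper's development orthomodularity is \emph{derived from} this very lemma via Dacey's criterion (Lemma~\ref{lem:Dacey}, used in Lemma~\ref{lem:MOL}), so appealing to it here --- or taking the ``slicker route'' through the exchange property of atomistic orthomodular lattices --- would be circular. Fortunately, none of the bookkeeping you worry about needs it. All you use is that $(A \cup B)\cc = A\cc \vee B\cc$, which holds for the closure operator of any orthogonality space, and that $\{d,e'\}\cc \subseteq (D')\c$ whenever $d, e' \perp D'$, which holds because $(D')\c$ is orthoclosed and contains $\{d,e'\}$. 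With these two observations (plus point-closedness from Lemma~\ref{lem:atomistic} to rule out $e'=d$ before applying (L$_1$)), your main case goes through verbatim and reproduces the paper's argument.
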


\begin{proof}
The assertion is trivial if $D$ is empty; let us assume that $D$ is non-empty. As we have assumed $X$ to have finite rank, $D$ is finite. Let $D = \{d_1, \ldots, d_k \}$, where $k \geq 1$.

By (L$_1$), there is an $e_1 \perp d_1$ such that $\{d_1,e\}\cc = \{d_1,e_1\}\cc$. Similarly, we see that there is, for $i = 2, \ldots, k$, an $e_i \perp d_i$ such that $\{d_i,e_{i-1}\}\cc = \{d_i,e_i\}\cc$. We conclude
\[ \begin{split}
(D \cup \{e\})\cc
& \;=\; \{e\} \vee \{d_1\} \vee \ldots \vee \{d_k\} \\
& \;=\; \{d_1\} \vee \{e_1\} \vee \{d_2\} \vee \ldots \vee \{d_k\} \\
& \;=\; \{d_1\} \vee \{d_2\} \vee \{e_2\} \vee \ldots \vee \{d_k\} \\
& \;=\; \ldots \\
& \;=\; \{d_1\} \vee \{d_2\} \vee \ldots \vee \{d_k\} \vee \{e_k\}
\;=\; (D \cup \{e_k\})\cc.
\end{split} \]
We observe that $f = e_k$ fulfils the requirement.
\end{proof}

The following useful criterion for ${\mathcal C}(X, \perp)$ to be orthomodular is due to J.~R.~Dacey \cite{Dac}; see also \cite[Theorem~35]{Wlc}.

\begin{lemma} \label{lem:Dacey}
${\mathcal C}(X, \perp)$ is orthomodular if and only if, for any $A \in {\mathcal C}(X, \perp)$ and any maximal orthogonal subset $D$ of $A$, we have $A = D\cc$.
\end{lemma}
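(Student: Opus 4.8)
The plan is to reduce both implications to the familiar reformulation of orthomodularity: an ortholattice is orthomodular if and only if, whenever $a \le b$ and $a\c \wedge b = 0$, one has $a = b$. (Both directions of this equivalence are standard and use only De Morgan's laws.) Throughout I would work with the concrete description of ${\mathcal C}(X, \perp)$: meets are intersections, the orthocomplementation is $\c$, and the bottom element is $\emptyset$, since $X\c = \emptyset$ by irreflexivity of $\perp$ and hence $\emptyset$ is orthoclosed and contained in every orthoclosed set.

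The engine of the argument is a single observation. Let $A$ be orthoclosed and let $D$ be a maximal orthogonal subset of $A$. Then: (i)~$D\cc \subseteq A$, because $\cc$ is a monotone closure operator and $A = A\cc$; (ii)~$A \cap D\c \subseteq D$, because an element $e \in A \cap D\c$ with $e \notin D$ would make $D \cup \{e\}$ a strictly larger orthogonal subset of $A$, contradicting maximality; and (iii)~$D \cap D\c = \emptyset$, because an element of $D \cap D\c$ would be orthogonal to itself. Combining (ii) and (iii), $A \cap D\c \subseteq D \cap D\c = \emptyset$, and since $(D\cc)\c = D\c$, this says exactly that $A \wedge (D\cc)\c = 0$ in ${\mathcal C}(X, \perp)$.

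Granting this, the implication ``orthomodular $\Rightarrow$ the stated condition'' is a one-liner: if $D$ is a maximal orthogonal subset of an orthoclosed $A$, then $B = D\cc$ satisfies $B \le A$ and $A \wedge B\c = 0$, so orthomodularity forces $A = B = D\cc$. For the converse I would verify the reformulated orthomodular law directly. Let $A, C \in {\mathcal C}(X, \perp)$ with $A \subseteq C$ and $A\c \cap C = \emptyset$; I must show $A = C$. I choose a maximal orthogonal subset $D$ of $A$; the stated condition applied to $A$ gives $D\cc = A$, whence $D\c = A\c$. I then extend $D$ to a maximal orthogonal subset $D'$ of $C$, which is possible because $D$, lying in $A \subseteq C$, is an orthogonal subset of $C$; the stated condition applied to $C$ gives $D'\cc = C$. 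If some $e \in D'$ did not belong to $D$, it would be orthogonal to every element of $D$, hence $e \in D\c \cap C = A\c \cap C = \emptyset$, which is impossible. Therefore $D' = D$, and so $C = D'\cc = D\cc = A$.

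I do not expect a serious obstacle: the whole argument is bookkeeping with the closure operator $\cc$ together with the irreflexivity of $\perp$. The one point that deserves a comment is the existence of maximal orthogonal subsets used in the converse, which is automatic in the present setting since $X$ has finite rank (in Dacey's general formulation one invokes Zorn's lemma). If anything is the conceptual core, it is the observation in the second paragraph that for a maximal orthogonal $D$ inside an orthoclosed $A$ the ``orthogonal residue'' $A \wedge (D\cc)\c$ already collapses to $0$; from this, both implications follow immediately.
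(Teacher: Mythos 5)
Your proof is correct: the reduction to the reformulated orthomodular law ($a\le b$ and $a\c\wedge b=0$ imply $a=b$), the observation that $A\cap D\c=\emptyset$ for a maximal orthogonal $D\subseteq A$, and the extension argument $D\subseteq D'$ in the converse all check out, and the existence of maximal orthogonal subsets is indeed guaranteed by the standing finite-rank assumption. Note, however, that the paper offers no proof of this lemma at all -- it is quoted as a known result of Dacey, with a pointer to Wilce's survey (Theorem~35 there) -- so there is no internal argument to compare against; your write-up is essentially the standard proof found in those references.
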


It follows that, by virtue of condition (L$_1$), we may describe ${\mathcal C}(X, \perp)$ as follows.

\begin{lemma} \label{lem:MOL}
Let $(X, \perp)$ fulfil {\rm (L$_1$)} and let $m$ be the rank of $X$. Then ${\mathcal C}(X, \perp)$ is an atomistic, modular ortholattice of length $m$.
\end{lemma}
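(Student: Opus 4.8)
The plan is to deduce everything from the already-established facts plus Dacey's criterion. First I would invoke Lemma~\ref{lem:atomistic} to know that $(X,\perp)$ is point-closed, so that ${\mathcal C}(X,\perp)$ is atomistic with the singletons $\{e\}$ as its atoms; this already gives atomisticity and identifies the length of ${\mathcal C}(X,\perp)$ with the supremum of the sizes of orthogonal subsets of $X$, which by Definition~\ref{def:OS} and the finite-rank hypothesis is exactly $m$. The nontrivial content is that ${\mathcal C}(X,\perp)$ is modular, and the route to that is through orthomodularity first.

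Next I would establish orthomodularity using Lemma~\ref{lem:Dacey}: I must show that for every $A\in{\mathcal C}(X,\perp)$ and every maximal orthogonal subset $D\subseteq A$ one has $A=D\cc$. Since $D\subseteq A$ and $A$ is orthoclosed, $D\cc\subseteq A$ automatically, so the work is the reverse inclusion. Suppose for contradiction that there is some $e\in A\setminus D\cc$. Then Lemma~\ref{lem:for-Dacey} furnishes an $f\perp D$ with $(D\cup\{e\})\cc=(D\cup\{f\})\cc$; in particular $f\in(D\cup\{e\})\cc\subseteq A\cc=A$, and $f\perp D$ together with $f\neq 0$-type reasoning shows $D\cup\{f\}$ is still an orthogonal subset of $A$ properly containing $D$ (one checks $f\notin D$, since $f\perp D$ would force $f\perp f$ if $f\in D$, contradicting irreflexivity). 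This contradicts maximality of $D$. Hence $A=D\cc$, and Dacey's Lemma yields that ${\mathcal C}(X,\perp)$ is orthomodular.

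Finally, to upgrade orthomodularity to modularity I would use the finite length. In a complete orthomodular lattice of finite length, modularity is equivalent to the absence of certain infinite ascending chains — but more directly, one can argue that an orthomodular lattice of finite length in which every element is a join of atoms, together with the covering property, is modular; alternatively, the cleanest path is to note that ${\mathcal C}(X,\perp)$ is an orthomodular lattice satisfying the exchange property on atoms (which follows from atomisticity plus orthomodularity in finite length), and an atomistic orthomodular lattice with the exchange property and finite length is modular. I expect the verification of the exchange/covering property to be the main obstacle: one needs that if $p$ is an atom with $p\not\le A$ then $A\vee p$ covers $A$, which in this setting follows by writing $A=D\cc$ for a maximal orthogonal $D\subseteq A$, extending $D$ by one orthogonal element inside $(A\vee p)$ via Lemma~\ref{lem:for-Dacey} applied to $p$, and checking that no larger orthogonal set fits because the rank is finite and $A\vee p$ has length exactly one more than $A$. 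Once the covering property is in hand, modularity of a finite-length atomistic orthomodular lattice is standard, and combining this with the length computation from the first paragraph completes the proof.
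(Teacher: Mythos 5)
Your proposal follows the paper's proof essentially step for step: atomisticity from Lemma~\ref{lem:atomistic}, orthomodularity via Dacey's criterion (Lemmas~\ref{lem:for-Dacey} and~\ref{lem:Dacey}), the covering property via Lemma~\ref{lem:for-Dacey}, and then the standard fact that a finite-length atomistic ortholattice with the covering property is modular (the paper cites \cite[Lemma~30.3]{MaMa}); your explicit verification of the hypothesis of Dacey's criterion is correct and merely fills in what the paper leaves implicit. Two small repairs are needed in your write-up. First, the parenthetical claim that the exchange property ``follows from atomisticity plus orthomodularity in finite length'' is false --- there exist finite atomistic orthomodular lattices that are not modular --- but you do not actually rely on it, since you go on to verify the covering property directly. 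Second, your justification of the covering step is circular as phrased (``$A\vee p$ has length exactly one more than $A$'' is precisely what is to be shown); the clean conclusion, which is the one the paper uses, is that once $A\vee\{e\}=A\vee\{f\}$ with $\{f\}$ an atom orthogonal to $A$, orthomodularity makes the interval $[A,\,A\vee\{f\}]$ isomorphic to $[0,\{f\}]$, so $A\vee\{f\}$ covers $A$.
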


\begin{proof}
By Proposition~\ref{prop:OS-ortholattice} and Lemma~\ref{lem:atomistic}, ${\mathcal C}(X, \perp)$ is an atomistic ortholattice. From Lemmas~\ref{lem:for-Dacey} and~\ref{lem:Dacey}, it follows that ${\mathcal C}(X, \perp)$ is orthomodular.

As we have assumed $X$ to be of finite rank $m$, the top element $X$ of ${\mathcal C}(X, \perp)$ is the join of $m$ mutually orthogonal atoms. It follows that ${\mathcal C}(X, \perp)$ has length $m$.

We claim that ${\mathcal C}(X, \perp)$ fulfils the covering property. Let $A \in {\mathcal C}(X, \perp)$ and let $e \in X$ be such that $e \notin A$. By Lemma~\ref{lem:Dacey}, there is an orthogonal set $D$ such that $A = D\cc$. By Lemma~\ref{lem:for-Dacey}, there is an $f \perp D$ such that $A \vee \{e\} = (D \cup \{e\})\cc = (D \cup \{f\})\cc = A \vee \{f\}$. Note that $\{f\}$ is an atom orthogonal to $A$. Hence it follows by the orthomodularity of ${\mathcal C}(X, \perp)$ that $A \vee \{e\}$ covers $A$.

Finally, an atomistic ortholattice of finite length fulfilling the covering property is modular \cite[Lemma~30.3]{MaMa}.
\end{proof}

We now turn to the consequences of condition (L$_2$). In the presence of (L$_1$), there are a couple of alternative formulations.

We call $(X, \perp)$ {\it reducible} if $X$ is the disjoint union of non-empty sets $A$ and $B$ such that $e \perp f$ for any $e \in A$ and $f \in B$, and otherwise {\it irreducible}.

\begin{lemma} \label{lem:irreducible}
Let $(X, \perp)$ fulfil {\rm (L$_1$)}. Then the following are equivalent:
\begin{itemize}

\item[\rm (1)] $X$ fulfils {\rm (L$_2$)}.

\item[\rm (2)] For any orthogonal elements $e, f \in X$, $\{e,f\}\cc$ contains a third element.

\item[\rm (3)] $X$ is irreducible.

\item[\rm (4)] ${\mathcal C}(X)$ is irreducible.

\end{itemize}
\end{lemma}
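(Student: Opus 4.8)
The plan is to prove the chain of equivalences by the cycle (1) $\Rightarrow$ (2) $\Rightarrow$ (3) $\Rightarrow$ (4) $\Rightarrow$ (1), invoking Lemma~\ref{lem:MOL} throughout so that ${\mathcal C}(X, \perp)$ is an atomistic modular ortholattice of finite length.

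First, (1) $\Rightarrow$ (2): given orthogonal $e, f \in X$, condition (L$_2$) directly provides an element $g \neq e, f$ with $\{e, f\}\c = \{e, g\}\c$, hence $g \in \{e, g\}\cc = \{e, f\}\cc$, which is therefore a third element of $\{e, f\}\cc$.

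Next, (2) $\Rightarrow$ (3): suppose for contradiction that $X = A \,\dot\cup\, B$ with $A, B$ non-empty and $A \perp B$. Pick $e \in A$ and $f \in B$; then $e \perp f$. Since $A \perp \{f\}$ and $B \perp \{e\}$, every element of $X$ is orthogonal to $e$ or to $f$, so the orthogonal pair $\{e, f\}$ is maximal among orthogonal sets it is contained in -- more precisely, $\{e, f\}\c = (A \setminus \{e\}) \cup (B \setminus \{f\})$ and any $g \in \{e, f\}\cc$ satisfies $g \perp \{e, f\}\c$; using irredundance together with the atomistic structure one checks $\{e, f\}\cc = \{e, f\}$, contradicting (2). (The cleanest route here is lattice-theoretic: $\{e\}$ and $\{f\}$ would be atoms whose join is a height-two element of ${\mathcal C}(X, \perp)$ covering each of them with no atom strictly between, forcing that interval to be the two-element chain's square, i.e.\ $\{e, f\}\cc = \{e, f\}$.)

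For (3) $\Leftrightarrow$ (4): an ortholattice decomposition ${\mathcal C}(X, \perp) \cong L_1 \times L_2$ with both factors non-trivial corresponds to a central element $c$ with $0 < c < 1$; since ${\mathcal C}(X, \perp)$ is atomistic, $c$ and $c\c$ are each joins of atoms, and $c\c = \{c\}\c$ shows every atom lies below $c$ or below $c\c$, partitioning $X$ into two non-empty mutually orthogonal pieces -- so ${\mathcal C}(X)$ reducible implies $X$ reducible. Conversely, if $X = A \,\dot\cup\, B$ with $A \perp B$, then $A\cc$ is a central element of ${\mathcal C}(X, \perp)$ with complement $B\cc$, both non-trivial, so ${\mathcal C}(X)$ is reducible; contrapositives give the claimed equivalence. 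Finally (4) $\Rightarrow$ (1) (equivalently, $\neg$(1) $\Rightarrow \neg$(4)): if (L$_2$) fails, there are $e \perp g$ with $\{e, g\}\c = \{e, f\}\c$ for no $f \neq e, g$; arguing as in (2) $\Rightarrow$ (3) in reverse, $\{e, g\}\cc = \{e, g\}$, and then $\{g\}$ together with $\{e, g\}\cc \wedge (\text{something})$ -- more directly, one shows $\{g\}$ is both an atom and, relative to the interval below $\{e,g\}\cc$, splits off as a direct factor, yielding a central atom and hence reducibility of ${\mathcal C}(X)$.

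The main obstacle is the careful handling of the step that produces $\{e, f\}\cc = \{e, f\}$ from the failure of a "third element" to exist: one must pass correctly between the combinatorial statement about $X$ and the lattice-theoretic statement about the height-two interval $[0, \{e,f\}\cc]$ in the atomistic modular ortholattice ${\mathcal C}(X, \perp)$, and identify when such an interval degenerates. Here irredundance and Lemma~\ref{lem:atomistic} (point-closedness, atoms $=$ singletons) are exactly what is needed to rule out spurious extra atoms. Once this bridge is in place, the remaining implications are routine manipulations with central elements in atomistic ortholattices.
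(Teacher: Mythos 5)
Most of your cycle matches the paper's proof: (1)$\Leftrightarrow$(2) via the exchange property, (2)$\Rightarrow$(3) by showing that a cross-pair $e \in A$, $f \in B$ in a reducible space has $\{e,f\}\cc = \{e,f\}$, and the passage between reducibility of $X$ and of ${\mathcal C}(X,\perp)$ via the correspondence between orthogonal partitions of the atoms and central elements. In (2)$\Rightarrow$(3) you leave the decisive computation as ``one checks''; the paper does it concretely: if $g \in \{e,f\}\cc$ lies in $A$, then $g \perp f$, so $g \in (\{e\} \vee \{f\}) \cap \{f\}\c = \{e\}$ by orthomodularity. That step is recoverable from your sketch, so I would not count it as a gap.

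The genuine gap is in (4)$\Rightarrow$(1), which is the hard direction. You correctly reduce the failure of (L$_2$) to the existence of orthogonal $e, g$ with $\{e,g\}\cc = \{e,g\}$, but you then assert that $\{g\}$ ``splits off as a direct factor, yielding a central atom''. This is false in general: take ${\mathcal C}(X,\perp) \cong {\mathcal L}(H_1) \times {\mathcal L}(H_2)$ for two $2$-dimensional positive definite quadratic spaces (a rank-$4$ space satisfying (L$_1$)); the atoms $e = (\lin u, 0)$ and $g = (0, \lin v)$ span a two-element line, yet $\{g\}$ is not central, since each factor is irreducible with more than two atoms. What is true --- and what the paper invokes at exactly this point, citing Theorem 13.6 of Maeda--Maeda --- is that in an atomistic modular ortholattice with the covering property the relation ``$p = q$ or there is a third atom below $p \vee q$'' is an equivalence on atoms whose classes induce a direct decomposition of the lattice; irreducibility therefore forces a third atom below the join of any two atoms, which gives (2) and hence (1). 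The direct factor produced by your pair $(e,g)$ is the join of the whole perspectivity class of $g$, not $\{g\}$ itself, and proving that this join is central is precisely the nontrivial content that your proposal needs to supply or cite; as written, the argument for this implication does not go through.
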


\begin{proof}
(1) $\Rightarrow$ (2): This is obvious.

(2) $\Rightarrow$ (1): Assume that (2) holds. Let $e$ and $g$ be orthogonal elements of $X$. By assumption, $\{e,g\}\cc$ contains a third element $f$. Then $\{f\} \subseteq \{e\} \vee \{g\}$ and $\{e\} \cap \{f\} = \emptyset$. By Lemma \ref{lem:MOL}, ${\mathcal C}(X)$ fulfils the exchange property, hence $\{e,f\}\cc = \{e\} \vee \{f\} = \{e\} \vee \{g\} = \{e,g\}\cc$ and we conclude $\{e,f\}\c = \{e,g\}\c$. We have shown (L$_2$).

(2) $\Rightarrow$ (3): Assume that $X$ is reducible. Then $X = A \cup B$, where $A$ and $B$ are disjoint non-empty sets such that $e \perp f$ for any $e \in A$ and $f \in B$. Pick $e \in A$ and $f \in B$ and let $g \in \{e,f\}\cc$. We have that either $g \in A$ or $g \in B$. In the former case, $g \perp f$ and hence $g \in \{e,f\}\cc \cap \{f\}\c = (\{e\} \vee \{f\}) \cap \{f\}\c = \{e\}$, that is, $g = e$. Similarly, in the latter case, we have $g = f$. We conclude that $\{e,f\}\cc$ contains two elements only.

(3) $\Rightarrow$ (4): Assume that ${\mathcal C}(X)$ is not irreducible. Then ${\mathcal C}(X)$ is the direct product of non-trivial ortholattices $L_1$ and $L_2$. The atoms of $L_1 \times L_2$ are of the form $(p,0)$ or $(0,q)$, for an atom $p$ of $L_1$ or an atom $q$ of $L_2$, respectively. Furthermore, $(a,0) \perp (0,b)$ for any $a \in L_1$ and $b \in L_2$. We conclude that the set of atoms of ${\mathcal C}(X, \perp)$ can be partitioned into two non-empty subsets such that any element of one set is orthogonal to any of the other one. In view of Lemma \ref{lem:atomistic}, we conclude that $(X,\perp)$ is reducible.

(4) $\Rightarrow$ (2): Assume that ${\mathcal C}(X)$ is irreducible. By \cite[Theorem 13.6]{MaMa}, below the join of any two atoms of ${\mathcal C}(X, \perp)$ there is a third atom. In particular, for orthogonal elements $e, f \in X$, $\{e,f\}\cc = \{e\} \vee \{f\}$ contains a third element.
\end{proof}

We summarise:

\begin{theorem} \label{thm:OS}
Let $(X, \perp)$ be linear and of finite rank $m$. Then ${\mathcal C}(X, \perp)$ is an irreducible, atomistic, modular ortholattice of length $m$.
\end{theorem}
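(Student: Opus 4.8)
The plan is to assemble the conclusion directly from the lemmas already proved in this section, since Theorem~\ref{thm:OS} is essentially a repackaging of them. First I would recall that, by hypothesis, $(X,\perp)$ is linear, meaning it satisfies both (L$_1$) and (L$_2$), and it has finite rank $m$. From (L$_1$) alone, Lemma~\ref{lem:MOL} tells us that ${\mathcal C}(X,\perp)$ is an atomistic, modular ortholattice of length $m$. This already delivers three of the four asserted properties, so the only remaining thing to justify is irreducibility.

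For irreducibility, I would invoke Lemma~\ref{lem:irreducible}, whose equivalence~(1)$\Leftrightarrow$(4) is exactly what we need: under (L$_1$), condition (L$_2$) holds if and only if ${\mathcal C}(X)$ is irreducible. Since $(X,\perp)$ is linear it satisfies (L$_2$), hence ${\mathcal C}(X,\perp)$ is irreducible. Combining this with the output of Lemma~\ref{lem:MOL} gives the full statement: ${\mathcal C}(X,\perp)$ is an irreducible, atomistic, modular ortholattice of length~$m$.

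So the proof is really just two sentences: apply Lemma~\ref{lem:MOL} for ``atomistic, modular, length $m$'', and apply Lemma~\ref{lem:irreducible} (the implication (1)$\Rightarrow$(4), via the chain (1)$\Rightarrow$(2)$\Rightarrow$(3)$\Rightarrow$(4) established there) for ``irreducible''. There is no genuine obstacle here; the hard work was front-loaded into the Dacey-style argument of Lemma~\ref{lem:MOL} (establishing orthomodularity and then the covering property, whence modularity by \cite[Lemma~30.3]{MaMa}) and into the four-way equivalence of Lemma~\ref{lem:irreducible}. If anything, the only point requiring a moment's care is making sure the rank $m$ of $(X,\perp)$ is correctly identified with the length of ${\mathcal C}(X,\perp)$ — but this too is already recorded in Lemma~\ref{lem:MOL}, where the top element is exhibited as a join of $m$ mutually orthogonal atoms. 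I would therefore present Theorem~\ref{thm:OS} as an immediate corollary, with the one-line proof simply citing Lemmas~\ref{lem:MOL} and~\ref{lem:irreducible}.
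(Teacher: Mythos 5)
Your proof is correct and matches the paper exactly: the paper introduces Theorem~\ref{thm:OS} with the words ``We summarise,'' intending precisely the combination of Lemma~\ref{lem:MOL} (atomistic, modular, length $m$ from (L$_1$)) and Lemma~\ref{lem:irreducible} (irreducibility from (L$_2$)) that you spell out.
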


We arrive at the main result of this section.

\begin{theorem} \label{thm:orthogonality-spaces-by-orthomodular-spaces}
Let $(X, \perp)$ be a linear orthogonality space of finite rank $m \geq 4$. Then there is a $\star$-sfield $K$ and an $m$-dimensional Hermitian space $H$ over $K$ such that ${\mathcal C}(X, \perp)$ is isomorphic to ${\mathcal L}(H)$. In particular, $(X,\perp)$ is then isomorphic to ${(P(H), \perp)}$.
\end{theorem}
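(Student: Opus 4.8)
The plan is to combine the structural results already established with the coordinatisation theorem for orthomodular lattices, Theorem~\ref{thm:orthomodular-space}. By Theorem~\ref{thm:OS}, the hypothesis that $(X,\perp)$ is linear of finite rank $m$ guarantees that ${\mathcal C}(X,\perp)$ is an irreducible, atomistic, modular ortholattice of length $m$. Since $m\geq 4$ by assumption, the converse direction of Theorem~\ref{thm:orthomodular-space} applies directly: there exist a $\star$-sfield $K$ and an $m$-dimensional Hermitian space $H$ over $K$ such that ${\mathcal C}(X,\perp)$ is isomorphic to ${\mathcal L}(H)$ as an ortholattice.

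It then remains to transport this isomorphism of ortholattices down to an isomorphism of the underlying orthogonality spaces. Here I would invoke Lemma~\ref{lem:atomistic}: since $(X,\perp)$ satisfies (L$_1$), it is point-closed, so by Proposition~\ref{prop:OS-ortholattice} the map $e\mapsto\{e\}$ identifies $(X,\perp)$ with the set of atoms of ${\mathcal C}(X,\perp)$ equipped with the inherited orthogonality relation. On the other side, ${\mathcal L}(H)$ is atomistic with atoms the one-dimensional subspaces $\lin u$, $u\in H\withoutzero$, and the orthogonality relation inherited from ${\mathcal L}(H)$ on these atoms is precisely the orthogonality relation of $(P(H),\perp)$. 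An ortholattice isomorphism carries atoms to atoms and preserves the relation ``$a\leq b\c$'', hence it restricts to an isomorphism between the atom sets as orthogonality spaces. Composing these three identifications yields an isomorphism $(X,\perp)\cong(P(H),\perp)$.

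One small point to check along the way is that the Hermitian space $H$ produced by Theorem~\ref{thm:orthomodular-space} is genuinely anisotropic, so that $(P(H),\perp)$ is a legitimate orthogonality space in the sense of the Example; but this is built into the definition of Hermitian space adopted here, and the coordinatisation theorem is stated in exactly those terms, so nothing extra is needed.

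I do not expect a serious obstacle: the theorem is essentially a repackaging of Theorem~\ref{thm:orthomodular-space} via the dictionary between point-closed orthogonality spaces and atomistic ortholattices set up in Section~\ref{sec:preliminaries}. If anything, the only place demanding mild care is spelling out that an ortholattice isomorphism preserves orthogonality of atoms in both directions — which is immediate from the fact that it commutes with the orthocomplementation $\c$ and with the order — and that the resulting bijection $X\to P(H)$ is the one induced by the chosen isomorphism ${\mathcal C}(X,\perp)\to{\mathcal L}(H)$ on atoms.
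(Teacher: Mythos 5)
Your proposal is correct and follows essentially the same route as the paper's own proof: invoke Theorem~\ref{thm:OS} together with the converse direction of Theorem~\ref{thm:orthomodular-space} to obtain the ortholattice isomorphism ${\mathcal C}(X,\perp)\cong{\mathcal L}(H)$, then use Lemma~\ref{lem:atomistic} to identify $(X,\perp)$ and $(P(H),\perp)$ with the atom sets of the respective ortholattices. The paper's proof is simply a terser version of the same argument; your additional remarks on atoms and anisotropy are correct but not needed beyond what the cited results already provide.
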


\begin{proof}
By Theorems \ref{thm:OS} and \ref{thm:orthomodular-space}, there is an $m$-dimensional Hermitian space $H$ such that ${\mathcal C}(X, \perp)$ is isomorphic to ${\mathcal L}(H)$. Moreover, by Lemma \ref{lem:atomistic}, $(X, \perp)$ can be identified with the set of atoms of ${\mathcal C}(X, \perp)$ endowed with the inherited orthogonality relation; and the same applies to $(P(H), \perp)$ and ${\mathcal L}(H)$.
\end{proof}

\section{The representation by quadratic spaces}
\label{sec:ordered-fields}

Provided that the rank is finite and at least $4$, we have seen that a linear orthogonality space arises from a Hermitian space over some $\star$-sfield. Our objective is to investigate the consequences of an additional condition. It will turn out that we can specify the $\star$-sfield considerably more precisely, namely, as a (commutative) formally real field.

We shall now make precise our idea to which we refer as the gradual transitivity of the orthogonality space. Given distinct elements $e$ and $f$, we will require a divisible group of automorphisms to exist such that the group orbit of $e$ is exactly $\{e,f\}\cc$ and $\{e,f\}\c$ is kept pointwise fixed.

It seems natural to assume that the group is, at least locally, linearly parametrisable. By the following lemma, the automorphism that maps $e$ to some $f \perp e$ actually interchanges $e$ and $f$. Accordingly, we will postulate that the group is cyclically ordered.

\begin{lemma} \label{lem:transition-between-orthogonal-elements-is-exchange}
Let $(X, \perp)$ be linear and of rank $\geq 4$. Let $e, f \in X$ such that $e \perp f$. Let $\phi$ be an automorphism of $X$ such that $\phi(e) = f$ and $\phi(d) = d$ for any $d \perp e, f$. Then $\phi(f) = e$.
\end{lemma}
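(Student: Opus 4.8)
The plan is to transfer the problem to the Hermitian space $H$ associated with $(X,\perp)$ by Theorem~\ref{thm:orthogonality-spaces-by-orthomodular-spaces} (valid since the rank is $\geq 4$), and to exploit the rigidity provided by Wigner's theorem. Write $e = \lin{x}$ and $f = \lin{y}$ with $x \perp y$ in $H$, and let $F = \{x,y\}\c \subseteq H$, a subspace of dimension $m - 2 \geq 2$. By hypothesis $\phi$ fixes every $\lin{d}$ with $d \in F\withoutzero$, so by the converse direction of Theorem~\ref{thm:Wigner-for-OS} there is a unique unitary operator $U$ on $H$ with $\phi = \phi_U$ and $U|_F = \mathrm{id}$. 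Since $U$ is unitary it preserves $F\c = \lin{x,y}$, so $U$ restricts to a unitary operator on the $2$-dimensional space $\lin{x,y}$; the whole question is now about this rank-$2$ restriction. We know $U(x) \in \lin{y}$, i.e. $U(x) = \beta y$ for some $\beta \in K\withoutzero$, and we must show $U(y) \in \lin{x}$.

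The key step is a short computation inside $\lin{x,y}$. Write $U(y) = \gamma x + \delta y$ with $\gamma, \delta \in K$; the goal is $\delta = 0$. Using that $U$ is unitary and $x \perp y$: from $\herm{U(x)}{U(y)} = \herm{x}{y} = 0$ we get $\herm{\beta y}{\gamma x + \delta y} = \beta \herm{y}{y} \delta^\star = 0$, and since $\beta \neq 0$ and $\herm{y}{y} \neq 0$ (anisotropy), this forces $\delta^\star = 0$, hence $\delta = 0$. Thus $U(y) = \gamma x \in \lin{x}$, which gives $\phi(f) = \phi(\lin{y}) = \lin{U(y)} = \lin{x} = e$, as desired. (One may also note $\gamma \neq 0$ since $U$ is injective, though this is not needed for the conclusion.)

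I expect the only real subtlety to be bookkeeping about which version of Wigner's theorem applies and checking its hypotheses: one needs $\dim F \geq 2$, which is exactly where rank $\geq 4$ enters (rank $\geq 3$ would already suffice for Theorem~\ref{thm:Wigner-for-OS} itself, but here $F = \{e,f\}\c$ has dimension $m-2$, so we need $m \geq 4$), and one must use the identification of ${\mathcal C}(P(H),\perp)$ with ${\mathcal L}(H)$ so that "$\phi$ fixes the points of $F$" becomes "$\lambda$ is the identity on the interval $[0,F]$". Once $U$ is in hand, the argument is the one-line orthogonality computation above; no genuine obstacle remains. An alternative, essentially equivalent route avoids invoking the unitary $U$ and argues directly in the ortholattice: $\bar\phi$ fixes $\{e,f\}\c$ and hence its orthocomplement $\{e,f\}\cc = \{e\}\vee\{f\}$, so $\bar\phi$ restricts to an automorphism of the height-two interval $[0, \{e\}\vee\{f\}]$ sending the atom $\{e\}$ to the atom $\{f\}$; in an irreducible modular ortholattice such an interval is the subspace lattice of a projective line with an orthogonality, and one checks the orthocomplement of $\{f\}$ within this interval must be $\{e\}$, giving $\phi(f)=e$. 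I would present the Hermitian-space version as the main line since the computation is cleaner.
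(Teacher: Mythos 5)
Your proposal is correct and follows essentially the same route as the paper: represent $(X,\perp)$ as $(P(H),\perp)$ via Theorem~\ref{thm:orthogonality-spaces-by-orthomodular-spaces}, invoke Theorem~\ref{thm:Wigner-for-OS} to obtain a unitary $U$ inducing $\phi$ and fixing $\{x,y\}\c$ pointwise, and conclude $U(y)\in\lin{x}$ from unitarity. The only difference is that you spell out the final orthogonality computation ($\delta^\star=0$) which the paper compresses into a single ``Hence''.
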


\begin{proof}
In accordance with Theorem \ref{thm:orthogonality-spaces-by-orthomodular-spaces}, let $H$ be the Hermitian space such that we can identify $(X, \perp)$ with $(P(H), \perp)$. Let $u, v \in H\withoutzero$ be such that $e = \lin u$ and $f = \lin v$. By Theorem \ref{thm:Wigner-for-OS}, there is a unitary operator $U$ inducing $\phi$ and being the identity on $\{u,v\}\c$. Then $U(u) \in \lin v$ and $U(w) = w$ for any $w \perp \lin{u,v}$. Hence $U(v) \in \lin u$, that is, $\phi(f) = e$.
\end{proof}

In what follows, we write $\Reals/2\pi\Integers$ for the additive group of reals modulo $\{2k\pi \colon k \in \Integers\}$, which can be identified with the circle group, that is, with the multiplicative group of complex numbers of modulus $1$. Moreover, let $G$ be a group of bijections of some set $W$, and let $S \subseteq W$. Then we say that $G$ {\it acts on $S$ transitively} if $S$ is invariant under $G$ and the action of $G$ restricted to $S$ is transitive. Moreover, we say that $G$ {\it acts on $S$ trivially} if, for all $g \in G$, $g$ is the identity on $S$.

We define the following condition on $(X, \perp)$. Here, we call an orthoclosed subset of the form $\{e,f\}\cc$, where $e$ and $f$ are distinct elements of $X$, a {\it line}.
\begin{itemize}

\item[(R$_1$)] For any line $L \subseteq X$, there is a divisible subgroup $\CircleGroup$ of $\Reals/2\pi\Integers$ and an injective homomorphism $\kappa \colon \CircleGroup \to \Aut{X,\perp} \komma t \mapsto \kappa_t$ such that,
\begin{itemize}

\item[($\alpha$)] the group $\{ \kappa_t \colon t \in \CircleGroup \}$ acts on $L$ transitively;

\item[($\beta$)] the group $\{ \kappa_t \colon t \in \CircleGroup \}$ acts on $L\c$ trivially.

\end{itemize}

\end{itemize}

Our discussion will focus to a large extent on the symmetries of $(X, \perp)$ that are described in condition (R$_1$). We will use the following terminology. For a line $L$, let $\kappa \colon \CircleGroup \to \Aut{X,\perp}$ be as specified in condition (R$_1$). Then we call an automorphism $\kappa_t$, $t \in \CircleGroup$, a {\it basic circulation} in $L$ and we call the subgroup $\{ \kappa_t \colon t \in \CircleGroup \}$ of $\Aut{X,\perp}$ a {\it basic circulation group} of $L$. Note that, by the injectivity requirement in condition (R$_1$), this group is isomorphic to $\CircleGroup$.

Moreover, we denote by $\Circ{X,\perp}$ the subgroup of $\Aut{X,\perp}$ that is generated by all basic circulations. The automorphisms belonging to $\Circ{X,\perp}$ are called {\it circulations} and $\Circ{X,\perp}$ itself is the {\it circulation group}.

\begin{example}
Let the $\Reals^n$, for a finite $n \geq 1$, be endowed with the usual Euclidean inner product. Then $(P(\Reals^n), \perp)$ is a linear orthogonality space fulfilling {\rm (R$_1$)}. Indeed, let $u, v$ be an orthonormal basis of a $2$-dimensional subspace of $\Reals^n$. Let $\CircleGroup = \Reals/2\pi\Integers$ and let $\kappa_t$, $t \in \CircleGroup$, be the rotation in the (oriented) $u$-$v$-plane by the angle $e^{it}$ and the identity on $\lin{u,v}\c$. Then conditions {\rm ($\alpha$)} and {\rm ($\beta$)} are obviously fulfilled.
\end{example}

For the general case, the intended effect of condition (R$_1$) is described in the following lemma. For $\phi \in \Aut{X,\perp}$ and $n \geq 1$, we let $\phi^n = \phi \circ \ldots \circ \phi$ ($n$ factors).

\begin{lemma} \label{lem:transitivity}
Let $(X, \perp)$ be of rank $\geq 4$, linear, and fulfilling {\rm (R$_1$)}. Let $e$ and $f$ be distinct elements of $X$. Then for each $n \geq 1$ there is an automorphism $\phi$ of $(X, \perp)$ such that $\phi^n(e) = f$ and $\phi(d) = d$ for any $d \perp e,f$. In case when $e$ and $f$ are orthogonal, we have in addition that $\phi^n(f) = e$.
\end{lemma}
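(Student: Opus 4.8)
The plan is to fix the line $L = \{e,f\}\cc$ and invoke condition (R$_1$) to obtain a divisible subgroup $\CircleGroup \leq \Reals/2\pi\Integers$ together with an injective homomorphism $\kappa \colon \CircleGroup \to \Aut{X,\perp}$ whose image acts transitively on $L$ and trivially on $L\c$. Since the action on $L$ is transitive, there is some $s \in \CircleGroup$ with $\kappa_s(e) = f$. The key point is divisibility: given $n \geq 1$, I can pick $t \in \CircleGroup$ with $nt = s$ in $\CircleGroup$, and then set $\phi = \kappa_t$. Because $\kappa$ is a homomorphism, $\phi^n = \kappa_t^n = \kappa_{nt} = \kappa_s$, so $\phi^n(e) = f$. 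Moreover $\phi = \kappa_t$ lies in the basic circulation group, which acts trivially on $L\c = \{e,f\}\c$, so $\phi(d) = d$ for every $d \perp e,f$. This already gives the first assertion.

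For the additional claim when $e \perp f$, I would apply Lemma~\ref{lem:transition-between-orthogonal-elements-is-exchange} to the automorphism $\psi = \phi^n = \kappa_s$. Indeed $\psi$ is an automorphism of $(X,\perp)$ with $\psi(e) = f$ and $\psi(d) = d$ for all $d \perp e,f$; since $(X,\perp)$ is linear of rank $\geq 4$ and $e \perp f$, that lemma yields $\psi(f) = e$, i.e.\ $\phi^n(f) = e$. The point where I would be slightly careful is the step ``pick $t$ with $nt = s$'': divisibility of $\CircleGroup$ guarantees existence of \emph{some} $n$-th root, and any choice works since we only need $\phi^n = \kappa_s$, not a canonical root; so no real obstacle arises, and I would note that the potential non-uniqueness of $t$ is harmless here.

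In short, the substance of the argument is entirely the interplay of transitivity on $L$ (to reach $f$ from $e$ at all) and divisibility of $\CircleGroup$ (to interpolate, extracting an $n$-th root of the relevant group element), with triviality of the action on $L\c$ handling the fixed-point condition for free and Lemma~\ref{lem:transition-between-orthogonal-elements-is-exchange} supplying the symmetric conclusion in the orthogonal case. There is no genuinely hard step; the lemma is essentially an unpacking of (R$_1$) designed to record, in a form convenient for later sections, that the gradual-transitivity hypothesis produces automorphisms with prescribed $n$-th powers.
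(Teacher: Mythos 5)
Your proposal is correct and follows essentially the same route as the paper: apply (R$_1$) to the line $\{e,f\}\cc$, use transitivity to find $\kappa_s$ sending $e$ to $f$, use divisibility to extract an $n$-th root $\phi$ with $\phi^n = \kappa_s$, and invoke Lemma~\ref{lem:transition-between-orthogonal-elements-is-exchange} for the orthogonal case. The only cosmetic difference is that you take the root in $\CircleGroup$ and push it forward via $\kappa$, whereas the paper phrases the divisibility directly in terms of the image subgroup of $\Aut{X,\perp}$; these are interchangeable.
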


\begin{proof}
By (R$_1$), applied to $\{e,f\}\cc$, there is a divisible subgroup $C$ of $\Aut{X,\perp}$ that acts transitively on $\{e,f\}\cc$ and is the identity on $\{e,f\}\c$. In particular, there is a $\psi \in C$ such that $\psi(e) = f$ and, by the divisibility of $C$, there is for any $n \geq 1$ a $\phi \in C$ such that $\phi^n = \psi$. The first part is clear; the additional assertion follows from Lemma~\ref{lem:transition-between-orthogonal-elements-is-exchange}.
\end{proof}

Our aim is to investigate the consequences of condition (R$_1$) for a linear orthogonality space. We first mention that (L$_2$), as part of the conditions of linearity, becomes redundant.

\begin{lemma} \label{lem:irreducibility-from-CG}
Let $(X, \perp)$ fulfil {\rm (L$_1$)} and {\rm (R$_1$)}. Then $X$ fulfils {\rm (L$_2$)}, that is, $X$ is linear.
\end{lemma}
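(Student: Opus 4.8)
The plan is to reduce the claim to the characterisation of (L$_2$) provided by Lemma~\ref{lem:irreducible}. Since $(X,\perp)$ is assumed to satisfy (L$_1$), that lemma applies, and it suffices to verify condition~(2) there: for any orthogonal $e, g \in X$, the orthoclosure $\{e,g\}\cc$ contains a third element. So I would fix orthogonal $e, g \in X$; irreflexivity of $\perp$ makes them distinct, hence $L := \{e,g\}\cc$ is a line in the sense of the definition preceding (R$_1$), and I argue by contradiction, assuming $L = \{e,g\}$.

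Applying (R$_1$) to $L$ yields a divisible subgroup $\CircleGroup$ of $\Reals/2\pi\Integers$ and an injective homomorphism $\kappa \colon \CircleGroup \to \Aut{X,\perp}$ whose image $G = \{\kappa_t \colon t \in \CircleGroup\}$ leaves $L$ invariant, acts transitively on $L$, and fixes $L\c$ pointwise. Transitivity on the two-element set $\{e,g\}$ gives some $s \in \CircleGroup$ with $\kappa_s(e) = g$; as $\kappa_s$ is a bijection mapping $\{e,g\}$ onto itself, it must interchange $e$ and $g$, so $\kappa_s(g) = e$ as well.

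The decisive point is that divisibility now forces a contradiction. Pick $r \in \CircleGroup$ with $2r = s$; then $\kappa_r^2 = \kappa_{2r} = \kappa_s$, so $\kappa_r^2(e) = g$. But $\kappa_r \in G$ also leaves $\{e,g\}$ invariant, so $\kappa_r(e) \in \{e,g\}$. If $\kappa_r(e) = e$, then $\kappa_r^2(e) = e \neq g$; if $\kappa_r(e) = g$, then injectivity of $\kappa_r$ forces $\kappa_r(g) = e$, whence again $\kappa_r^2(e) = e \neq g$. Either way this contradicts $\kappa_r^2(e) = g$. Hence $\{e,g\}\cc$ is not the two-element set $\{e,g\}$, so it contains a third element, and Lemma~\ref{lem:irreducible} then yields (L$_2$), i.e. the linearity of $(X,\perp)$.

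I do not expect a genuine obstacle here: the whole content is the observation that the divisibility clause of (R$_1$) is exactly what rules out a transitive action on a two-element orbit (equivalently, the transposition $\kappa_s$ of $e$ and $g$ cannot admit a square root inside $G$; one could also phrase this via orbit--stabilizer, using that a divisible group has no proper subgroup of finite index). The only points needing care are the two bookkeeping remarks: that orthogonal elements are distinct, so that $\{e,g\}\cc$ really is a line eligible for (R$_1$), and that Lemma~\ref{lem:irreducible} is available precisely because (L$_1$) is in force.
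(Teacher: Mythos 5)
Your proposal is correct and follows essentially the same route as the paper: reduce to condition (2) of Lemma~\ref{lem:irreducible}, assume $\{e,g\}\cc$ has only two elements, and use divisibility to produce a square root $\kappa_r$ of the transposition $\kappa_s$, which cannot exist on a two-element orbit. You merely spell out the final case analysis that the paper compresses into ``an impossible situation.''
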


\begin{proof}
Let $e, f \in X$ be orthogonal. We will show that $\{e,f\}\cc$ contains a third element. The assertion will then follow from Lemma \ref{lem:irreducible}.

Assume to the contrary that $\{e,f\}\cc$ is a two-element set. Let $\{ \kappa_t \colon t \in \CircleGroup \}$ be a basic circulation group of $\{e,f\}$. As the group acts transitively on $\{e,f\}$, there is a $t \in \CircleGroup \setminus \{0\}$ such that $\kappa_t(e) = f$. But $\{e,f\}$ is invariant also under $\kappa_{\frac t 2}$ and we have $\kappa_{\frac t 2}^2 = \kappa_t$, an impossible situation.
\end{proof}

The transitivity of a linear orthogonality space, which by Lemma \ref{lem:transitivity} is a consequence of condition (R$_1$), allows us to subject the representing Hermitian space to an additional useful condition.

\begin{lemma} \label{lem:unit-vectors}
Let $(X, \perp)$ be linear, of rank $\geq 4$, and fulfilling {\rm (R$_1$)}. Then there is a Hermitian space $H$ such that $(X, \perp)$ is isomorphic to $(P(H), \perp)$ and such that each one-dimensional subspace contains a unit vector.
\end{lemma}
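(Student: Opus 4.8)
The plan is to start from any Hermitian space representing $(X,\perp)$ and then replace its form — adjusting, if necessary, the involution on the underlying skew field — by a normalised one that induces the very same orthogonality relation but for which every line carries a unit vector. By Theorem~\ref{thm:orthogonality-spaces-by-orthomodular-spaces} we may fix a $\star$-sfield $K$ and an $m$-dimensional Hermitian space $H$ over $K$, with $m \geq 4$, such that $(X,\perp)$ is isomorphic to $(P(H),\perp)$; the latter is then itself linear, of rank $m$, and fulfils {\rm (R$_1$)}.

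The decisive step — and the only one that invokes {\rm (R$_1$)} — is the claim that the scalars $\herm v v$, $v \in H\withoutzero$, all lie in a single congruence class: for any $v, w \in H\withoutzero$ there is a nonzero $\mu \in K$ with $\herm v v = \mu\, \herm w w\, \mu^\star$. This is trivial when $v$ and $w$ are linearly dependent, so suppose they are not. Applying Lemma~\ref{lem:transitivity} to $(P(H),\perp)$ with $n = 1$, $e = \lin v$, $f = \lin w$, we obtain an automorphism $\phi$ with $\phi(\lin v) = \lin w$ that fixes every element of $\{\lin v, \lin w\}\c$. That set is $P(F)$ for the subspace $F = \{v,w\}\c$ of $H$, and $\dim F = m - 2 \geq 2$ because $m \geq 4$; hence Theorem~\ref{thm:Wigner-for-OS} yields a unitary operator $U$ on $H$ with $\phi = \phi_U$. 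From $\lin{U(v)} = \phi(\lin v) = \lin w$ we get $U(v) = \mu w$ for some nonzero $\mu$, and, as $U$ preserves the form, $\herm v v = \herm{U(v)}{U(v)} = \mu\, \herm w w\, \mu^\star$, which proves the claim.

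Now renormalise. Fix $u_0 \in H\withoutzero$ and put $c = \herm{u_0}{u_0}$; then $c \neq 0$ and, since $\herm{u_0}{u_0} = \herm{u_0}{u_0}^\star$, also $c^\star = c$. Define $\sigma \colon K \to K$ by $\sigma(a) = c\, a^\star c^{-1}$ and a new form by $\herm u v' = \herm u v\, c^{-1}$ for $u, v \in H$. Using only $c^\star = c$ one checks routinely that $(K, \sigma)$ is again a $\star$-sfield and that $\herm{\cdot}{\cdot}'$ is an anisotropic Hermitian form on the $(K,\sigma)$-space $H$; moreover $\herm u v' = 0$ if and only if $\herm u v = 0$, so $(P(H),\perp)$, and hence its isomorphism type $(X,\perp)$, is left unchanged. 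Finally, for any $v \in H\withoutzero$ the previous step (with $w = u_0$) gives a nonzero $\lambda \in K$ with $\herm v v = \lambda\, c\, \lambda^\star$, and then $\herm{\lambda^{-1}v}{\lambda^{-1}v}' = \lambda^{-1}\herm v v\,(\lambda^\star)^{-1}c^{-1} = \lambda^{-1}\lambda\, c\, \lambda^\star(\lambda^\star)^{-1}c^{-1} = 1$, so $\lambda^{-1}v$ is a unit vector on the line $\lin v$.

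I expect essentially all the substance to reside in the middle step: converting the abstract transitivity built into {\rm (R$_1$)} into the algebraic assertion that the self-inner-products of vectors form one congruence class, through Wigner's theorem in the shape of Theorem~\ref{thm:Wigner-for-OS}. The renormalisation itself is pure bookkeeping; the one point there demanding attention is that rescaling the form by $c^{-1}$ must be accompanied by the inner twist $a^\star \mapsto c\, a^\star c^{-1}$ of the involution — a correction that vanishes exactly when $c$ is central, in particular when $K$ is commutative.
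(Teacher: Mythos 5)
Your proposal is correct and follows essentially the same route as the paper: represent $(X,\perp)$ via Theorem~\ref{thm:orthogonality-spaces-by-orthomodular-spaces}, use Lemma~\ref{lem:transitivity} together with Theorem~\ref{thm:Wigner-for-OS} to move a fixed vector's length onto every line, and rescale the form (the paper cites Holland for this step, which you instead carry out explicitly, correctly twisting the involution by $a^\star \mapsto c\,a^\star c^{-1}$). The only difference is one of ordering --- the paper normalises first and then transports the unit vector by unitaries, while you first show all $\herm v v$ lie in one congruence class and then normalise --- which is immaterial.
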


\begin{proof}
By Theorem \ref{thm:orthogonality-spaces-by-orthomodular-spaces}, there is a Hermitian space $H$ such that $(X, \perp)$ is isomorphic to $(P(H), \perp)$.

Let $u \in H$. We can define a new Hermitian form on $H$ inducing the same orthogonality relation and such that $u$ becomes a unit vector; see, e.g., \cite{Hol3}. By Lemma \ref{lem:transitivity} and Theorem \ref{thm:Wigner-for-OS}, there is for any $v \in H$ a unitary operator such that $U(u) \in \lin v$. The assertion follows.
\end{proof}

For the rest of this section, let $H$ be a Hermitian space over the $\star$-sfield $K$ such that $H$ is of finite dimension $\geq 4$, each one-dimensional subspace contains a unit vector, and $(P(H), \perp)$ fulfils (R$_1$). Our aim is to be as specific as possible about the $\star$-sfield $K$.

\begin{lemma} \label{lem:circulation-group}
Let $T$ be a $2$-dimensional subspace of $H$ and let $\{ \kappa_t \colon t \in \CircleGroup \}$ be a basic circulation group of $P(T)$. Then, for each $t \in \CircleGroup$, there is a uniquely determined unitary operator $U_t$ inducing $\kappa_t$ and being the identity on $T\c$. Moreover, $ \CircleGroup \to U(H) \komma t \mapsto U_t$ is an injective homomorphism.
\end{lemma}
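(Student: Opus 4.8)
The plan is to exploit the converse part of Wigner's Theorem (Theorem~\ref{thm:Wigner-for-OS}) together with the fact that the ambient space has dimension $\geq 4$, so that the orthogonal complement $T\c$ of a $2$-dimensional subspace $T$ is at least $2$-dimensional. First I would fix a basic circulation group $\{\kappa_t \colon t \in \CircleGroup\}$ of $P(T)$. For each $t \in \CircleGroup$, the automorphism $\kappa_t$ of $(P(H), \perp)$ fixes every $\lin x$ with $x \in (T\c)\withoutzero$, by condition~($\beta$) of (R$_1$). Since $\dim T\c \geq 2$, Theorem~\ref{thm:Wigner-for-OS} applies with $F = T\c$: there is a unique unitary operator $U_t$ with $\phi_{U_t} = \kappa_t$ and $U_t|_{T\c}$ the identity. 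This gives the first assertion, including uniqueness.

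Next I would check that $t \mapsto U_t$ is a homomorphism. Given $s, t \in \CircleGroup$, the composite $U_s \circ U_t$ is unitary, induces $\phi_{U_s} \circ \phi_{U_t} = \kappa_s \circ \kappa_t = \kappa_{s+t}$ (using that $\kappa$ is a homomorphism on $\CircleGroup$), and restricts to the identity on $T\c$. By the uniqueness clause of Theorem~\ref{thm:Wigner-for-OS} applied again with $F = T\c$, we must have $U_s \circ U_t = U_{s+t}$. Injectivity is then immediate: if $U_t = I$, then $\kappa_t = \phi_{U_t}$ is the identity automorphism, and since $\kappa \colon \CircleGroup \to \Aut{X,\perp}$ is injective by the requirement in (R$_1$), we get $t = 0$.

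The only point that needs a little care — and the main obstacle, such as it is — is the interface between the two normalisations of the kernel. The map $\U(H) \to \Aut{P(H),\perp}$ has kernel $\{\epsilon I \colon \epsilon \in Z(K) \cap U(K)\}$, which is generally non-trivial, so a priori the lift of $\kappa_t$ to $\U(H)$ is only well-defined up to such a scalar. What rescues us is precisely the extra condition $U_t|_{T\c} = I$: among all unitary lifts of $\kappa_t$, the scalar ambiguity $\epsilon I$ would force $\epsilon = 1$ on the nonzero subspace $T\c$, hence $\epsilon = 1$ outright. This is exactly what the uniqueness part of Theorem~\ref{thm:Wigner-for-OS} encodes, so once that theorem is invoked with $F = T\c$ there is nothing further to do; the hypotheses of Lemma~\ref{lem:unit-vectors} guarantee $\dim H \geq 4$, so $\dim T\c \geq 2$ as required.
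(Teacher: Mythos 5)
Your proof is correct and follows essentially the same route as the paper's: invoke the converse part of Theorem~\ref{thm:Wigner-for-OS} with $F = T\c$ (noting $\dim T\c \geq 2$ since $\dim H \geq 4$) for existence and uniqueness of each $U_t$, use the uniqueness clause to show $U_s U_t = U_{s+t}$, and deduce injectivity from the injectivity of $t \mapsto \kappa_t$ required in (R$_1$). Your extra remark about the scalar ambiguity being absorbed by the normalisation $U_t|_{T\c} = I$ is a correct gloss on what the uniqueness clause already encodes.
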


\begin{proof}
By Theorem \ref{thm:Wigner-for-OS}, $\kappa_t$ is, for each $t \in \Reals$, induced by a unique unitary operator $U_t$ such that $U_t|_{T\c}$ is the identity. In particular, $\kappa_0$ is the identity on $P(H)$, hence $U_0$ must be the identity on $H$. Furthermore, for any $s, t \in \CircleGroup$, $U_s U_t$ induces $\kappa_{s+t} = \kappa_s \, \kappa_t$ and is the identity on $T\c$. The same applies to $U_{s+t}$ and it follows that $U_{s+t} = U_s U_t$. Finally, the injectivity assertion follows from the fact that, according to (R$_1$), the assignment $t \mapsto \kappa_t$ is already injective.
\end{proof}

\begin{lemma} \label{lem:K-is-field}
$K$ is commutative and the involution $^\star$ is the identity. In particular, $H$ is a quadratic space.
\end{lemma}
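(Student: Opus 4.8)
The idea is to exploit the basic circulation groups furnished by (R$_1$): each acts as a one-parameter group of unitary operators on a $2$-dimensional subspace $T$, fixing $T\c$ pointwise. Pick a $2$-dimensional subspace $T$ with orthonormal basis $u, v$ (orthonormal thanks to the standing hypothesis that every line contains a unit vector, together with the Gram--Schmidt step in Example \ref{ex:orthogonality-space-from-orthomodular-space-is-linear}). By Lemma \ref{lem:circulation-group} the basic circulation group of $P(T)$ lifts to an injective homomorphism $\CircleGroup \to \U(H) \komma t \mapsto U_t$, with each $U_t$ the identity on $T\c$. Since $U_t$ preserves $T = (T\c)\c$, it restricts to a unitary operator on $T$, and the whole analysis can be carried out inside the $2$-dimensional Hermitian space $T$; the orthogonal complement plays no further role. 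So the task reduces to: a divisible, non-trivial, commutative group of $2\times 2$ unitary matrices over $K$ (with respect to the orthonormal basis $u,v$) that acts transitively on $P(T)$ forces $K$ commutative with trivial involution.

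First I would write a general unitary $U$ on $T$ in the basis $u,v$. Unitarity with respect to the anisotropic Hermitian form having Gram matrix $I$ means $U^\star U = I$ where $^\star$ is conjugate-transpose; writing $U = \begin{smm} a & b \\ c & d \end{smm}$ this gives the relations $a a^\star + c c^\star = 1$, $b b^\star + d d^\star = 1$, $a b^\star + c d^\star = 0$. Transitivity on $P(T)$ means that for the fixed element $\lin u$ there are operators in the group sending $\lin u$ to every other one-dimensional subspace, in particular to $\lin v$; combined with Lemma \ref{lem:transition-between-orthogonal-elements-is-exchange} (or directly Lemma \ref{lem:transitivity}), there is a $U$ in the group with $U u \in \lin v$, $U v \in \lin u$, i.e. of anti-diagonal shape $\begin{smm} 0 & b \\ c & 0 \end{smm}$ with $b, c \in U(K)$. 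The key leverage, though, comes from divisibility: every element of the group has square roots inside the group. Applying this to the anti-diagonal exchange $U$, pick $V$ in the group with $V^2 = U$. Since the group is abelian, $V$ commutes with $U$; I would compute what commuting with a fixed anti-diagonal unitary forces on the entries of $V$, and then impose $V^2 = U$. The commutation $VU = UV$ with $U = \begin{smm} 0 & b \\ c & 0 \end{smm}$ pins down $V$ to a form like $\begin{smm} p & q \\ r & s \end{smm}$ subject to $pb = q c$-type identities — and combined with unitarity of $V$ this should express everything in terms of a single parameter lying in a commutative subfield, essentially $V \approx \begin{smm} p & q \\ q' & p' \end{smm}$ with $p,q$ related to $b,c$. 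Squaring and matching against $U$ then yields an equation of the form $2 p q' = \ldots$ (an explicit element of $K$), and more importantly an equation exhibiting $p$ and $q$ — hence the "rotation angle" — as commuting; iterating the divisibility argument to get $2^k$-th roots should let me generate, from the relations, a large commutative subfield and ultimately force $a b = b a$ for the entries appearing, i.e. commutativity of $K$ itself.

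For the involution: once commutativity is in hand, the unitary constraints become $a a^\star + c c^\star = 1$ etc. over a field. A circulation fixing neither $\lin u$ nor anything but acting transitively — say a generator of a dense cyclic-ordered piece — has, after diagonalising in a suitable basis, eigenvalues $\epsilon \in U(K)$, i.e. $\epsilon \epsilon^\star = 1$; divisibility gives arbitrary roots $\epsilon^{1/2^k} \in U(K)$. I would argue that the existence, for every element of the line and every $n$, of an $n$-th root automorphism (Lemma \ref{lem:transitivity}) forces the group $U(K)$ of norm-one elements to be divisible and, since it acts faithfully and transitively on the set of one-dimensional subspaces of a $2$-dimensional space over a commutative field, that $U(K) = \{\epsilon \in K \colon \epsilon^2 = 1\}$-extended-by-a-divisible-group; tracing through, the only way the anti-diagonal exchange $\begin{smm} 0 & b \\ b^{-\star} & 0 \end{smm}$ can have a square root with entries satisfying the norm-one condition over all $2$-divisions is if $\alpha^\star = \alpha$ for the relevant scalars, which propagates (using that every one-dimensional subspace, equivalently every scalar up to norm-one factors, is reachable) to $\alpha^\star = \alpha$ for all $\alpha \in K$. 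Hence $^\star = \mathrm{id}$, and $H$ is by definition a quadratic space.

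The main obstacle I anticipate is the non-commutativity bookkeeping in the middle step: over a genuine skew field one cannot freely rearrange products, so writing "$V^2 = U$" and extracting clean relations requires care about the order of factors, and the divisibility has to be invoked repeatedly (square roots, then square roots of those, etc.) to squeeze out enough commuting pairs before one is entitled to do ordinary field arithmetic. Getting the right normal form for a $2\times 2$ unitary that commutes with a fixed anti-diagonal one — and verifying that divisibility is inherited by the relevant abelian subgroup it generates, so that the root-taking actually closes up inside a commutative set — is the crux; everything after commutativity is established is comparatively routine linear algebra over a field with involution.
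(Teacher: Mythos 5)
Your setup is the right one (restrict the lifted unitaries to $T$, write them as $2\times 2$ matrices w.r.t.\ an orthonormal basis, and use the anti-diagonal exchange element produced by transitivity), but the core of the argument is missing, and the mechanism you lean on is the wrong one. First, divisibility buys you nothing here: the basic circulation group is abelian because it is a homomorphic image of $\CircleGroup$, so \emph{every} $U_t$ already commutes with the anti-diagonal $U_p = \begin{smm} 0 & \epsilon_1 \\ \epsilon_2 & 0 \end{smm}$ --- you do not need to manufacture a square root $V$ of $U_p$ to get a commuting partner. (The paper even remarks that divisibility is not used anywhere in the proof of Theorem~\ref{thm:ordered-field}.) More importantly, the relation $V^2 = U_p$ only constrains the finitely many entries of one matrix $V$; it can at best show that those particular scalars commute with each other. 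The step that is actually doing the work in the paper, and which your plan never clearly isolates, is that transitivity on $P(T)$ makes the ``slope map'' surjective: for \emph{every} $\xi \in K$ there is a $t$ with $U_t(\lin{e_1}) = \lin{\xi e_1 + e_2}$, so that $\xi = \beta^{-1}\alpha$ for the first column of some group element. Only this lets one promote an identity satisfied by the entries of circulation matrices (namely $(\beta^{-1}\alpha)^\star = -\epsilon_1^\star\,\beta^{-1}\alpha\,\epsilon_2$, read off from the orthogonality of the columns) to an identity $\xi^\star = \epsilon^\star \xi \epsilon$ valid for all $\xi \in K$. You gesture at reachability only in the involution paragraph and only as a hope (``the only way \dots is if $\alpha^\star = \alpha$''), which is not an argument.

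The second structural problem is your order of deduction. You plan to prove commutativity of $K$ first, by ``generating a large commutative subfield'' from iterated roots, and to treat the involution afterwards as routine. That gets the logic backwards and makes the hard step essentially unreachable: there is no evident way to pass from a commutative subfield generated by entries of some matrices to commutativity of all of $K$. The paper instead proves that $\xi^\star = \xi$ for \emph{all} $\xi \in K$ (using the slope surjectivity twice, plus a second special element $U_s$ with $U_s(\lin{e_1}) = \lin{e_1+e_2}$, whose commutation with a general $U_t$ forces its entry $\gamma$ into $Z(K)$ and then forces $\alpha = \alpha^\star$, $\beta = \beta^\star$); commutativity is then immediate, since $\alpha\beta = (\alpha\beta)^\star = \beta^\star\alpha^\star = \beta\alpha$ once $^\star$ is the identity antiautomorphism. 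Your plan contains no analogue of the $U_s$ step and no substitute for this final trick, and the passages flagged with ``should'' are exactly where the proof has to happen.
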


\begin{proof}
Let $T$ be a two-dimensional subspace of $H$. Let $\{ \kappa_t \colon t \in \CircleGroup \}$ be a basic circulation group of $P(T)$ and, in accordance with Lemma \ref{lem:circulation-group}, let the unitary operator $U_t$, for each $t \in \CircleGroup$, induce $\kappa_t$.

We will identify the operators $U_t$, $t \in \CircleGroup$, with their restriction to $T$ and represent them, w.r.t.\ a fixed  orthonormal basis $b_1, b_2$ of $T$, by $2 \times 2$-matrices. Let $t \in \CircleGroup$. Then $U_t = \begin{smm} \alpha & \gamma \\ \beta & \delta \end{smm}$, where $\alpha \alpha^\star + \beta \beta^\star = \gamma \gamma^\star + \delta \delta^\star = 1$ and $\alpha \gamma^\star + \beta \delta^\star = 0$. As $\kappa(\CircleGroup)$ acts transitively on $P(T)$, there is a $p \in \CircleGroup$ such that $U_p(\lin{b_1}) = \lin{b_2}$ and consequently also $U_p(\lin{b_2}) = \lin{b_1}$. Hence $U_p = \begin{smm} 0 & \epsilon_1 \\ \epsilon_2 & 0 \end{smm}$ for some $\epsilon_1, \epsilon_2 \in U(K)$.

Because
\[ \begin{split}
\begin{smm} \epsilon_2 \gamma & \epsilon_1 \alpha \\ \epsilon_2 \delta & \epsilon_1 \beta \end{smm}
\;=\; &\begin{smm} \alpha & \gamma \\ \beta & \delta \end{smm} \cdot
\begin{smm} 0 & \epsilon_1 \\ \epsilon_2 & 0 \end{smm}
\;=\; U_t \, U_p \;=\; U_{t+p} \\
\;=\; U_p \, U_t \;=\; &\begin{smm} 0 & \epsilon_1 \\ \epsilon_2 & 0 \end{smm} \cdot
\begin{smm} \alpha & \gamma \\ \beta & \delta \end{smm}
\;=\; \begin{smm} \beta \epsilon_1 & \delta \epsilon_1 \\ \alpha \epsilon_2 & \gamma \epsilon_2 \end{smm},
\end{split} \]
we have
\begin{equation} \label{fml:Ut1}
U_t \;=\;
\begin{smm} \alpha & \epsilon_1 \beta \epsilon_2^\star \\ \beta & \epsilon_1 \alpha \epsilon_1^\star \end{smm}
\;=\; \begin{smm} \alpha & \epsilon_2^\star \beta \epsilon_1 \\ \beta & \epsilon_2^\star \alpha \epsilon_2 \end{smm}.
\end{equation}

We next claim that, for any $\xi \in K$, there is a $t \in \CircleGroup$ such that $\xi = \beta^{-1} \alpha$, where $\begin{smm} \alpha \\ \beta \end{smm}$ is the first column vector of $U_t$. Indeed, by the transitivity of $\kappa(\CircleGroup)$, there is a $t \in \CircleGroup$ such that $U_t = \begin{smm} \alpha & \gamma \\ \beta & \delta \end{smm}$ maps $\lin{e_1}$ to $\lin{\xi e_1 + e_2}$. Then $\beta \neq 0$ and $\lin{\begin{smm} \beta^{-1}\alpha \\ 1 \end{smm}} = \lin{\begin{smm} \alpha \\ \beta \end{smm}} = \lin{U(e_1)} = \lin{\begin{smm} \xi \\ 1 \end{smm}}$, thus the assertion follows.

The orthogonality of the column vectors of the first matrix in (\ref{fml:Ut1}) implies $\alpha \epsilon_2 \beta^\star \epsilon_1^\star + \beta \epsilon_1 \alpha^\star \epsilon_1^\star = 0$ and hence $(\beta^{-1} \alpha)^\star = -\epsilon_1^\star \beta^{-1} \alpha \epsilon_2$, provided that $\beta \neq 0$. By the previous remark, we conclude $\xi^\star = -\epsilon_1^\star \xi \epsilon_2$ for any $\xi \in K$. From the case $\xi = 1$ we see that $\epsilon_2 = -\epsilon_1$. Let $\epsilon = \epsilon_2$. Then $\epsilon \in U(K)$ is such that
\begin{equation} \label{fml:xi-star}
\xi^\star \;=\; \epsilon^\star \xi \epsilon \text{ for any $\xi \in K$},
\end{equation}
and we conclude that for each $t \in \CircleGroup$ there are $\alpha, \beta \in K$ such that
\begin{equation} \label{fml:Ut2}
U_t \;=\; \begin{smm} \alpha & -\beta^\star \\ \beta & \alpha^\star \end{smm}.
\end{equation}
Let now $s \in \CircleGroup$ be such that $U_s$ maps $\lin{e_1}$ to $\lin{e_1+e_2}$. Then there is a $\gamma \in K$ such that $U_s = \begin{smm} \gamma & -\gamma^\star \\ \gamma & \gamma^\star \end{smm}$. Note that $2 \gamma \gamma^\star = 1$; in particular, $K$ does not have characteristic~$2$. Moreover, given any $U_t$ according to (\ref{fml:Ut2}), we have
\[ \begin{split}
\begin{smm} \tilde\alpha & -\tilde\beta^\star \\ \tilde\beta & \tilde\alpha^\star \end{smm} \;=\;
& \begin{smm} \gamma & -\gamma^\star \\ \gamma & \gamma^\star \end{smm} \cdot
\begin{smm} \alpha & -\beta^\star \\ \beta & \alpha^\star \end{smm}
\;=\; U_s \, U_t \\
\;=\; & U_t \, U_s
\;=\;
\begin{smm} \alpha & -\beta^\star \\ \beta & \alpha^\star \end{smm} \cdot
\begin{smm} \gamma & -\gamma^\star \\ \gamma & \gamma^\star \end{smm}.
\end{split} \]
This means
\begin{align*}
& \alpha \gamma - \beta \gamma^\star \;=\;
  \gamma \alpha - \gamma \beta^\star \;=\; \tilde\alpha, \\
& \alpha \gamma + \beta \gamma^\star \;=\;
  \gamma \beta + \gamma \alpha^\star \;=\; \tilde\beta, \\
& \gamma \alpha + \gamma^\star \beta \;=\; 
  \alpha^\star \gamma + \beta \gamma \;=\; \tilde\beta, \\
& \gamma \alpha - \gamma^\star \beta \;=\;
  \alpha \gamma - \beta^\star \gamma \;=\; \tilde\alpha.
\end{align*}
Consequently, $2 \alpha \gamma = 2 \gamma \alpha = \tilde\alpha + \tilde\beta$ and $2 \beta \gamma^\star = 2 \gamma^\star \beta = \tilde\beta - \tilde\alpha$. Hence $\gamma$ commutes with $\alpha$ and, because $2\gamma^\star = \gamma^{-1}$, also with $\beta$. We conclude that $\gamma \in Z(K)$. By (\ref{fml:xi-star}), it follows that $\gamma^\star = \gamma$. Furthermore, we have $(\alpha + \beta + \alpha^\star - \beta^\star) \gamma = \tilde\alpha + \tilde\beta = 2 \alpha \gamma$ and $(\beta - \alpha + \alpha^\star + \beta^\star) \gamma = \tilde\beta - \tilde\alpha = 2 \beta \gamma$. It follows that $\alpha^\star - \beta^\star = \alpha - \beta$ and $\alpha^\star + \beta^\star = \alpha + \beta$, that is, $\alpha = \alpha^\star$ and $\beta = \beta^\star$.

Since $\alpha = \alpha^\star = \epsilon^\star \alpha \epsilon$, we have $\alpha \epsilon = \epsilon \alpha$, and similarly we see that $\beta \epsilon = \epsilon \beta$. Hence $(\beta^{-1} \alpha)^\star = \epsilon^\star \beta^{-1} \alpha \epsilon = \beta^{-1} \alpha$, provided that $\beta \neq 0$. We conclude $\xi^\star = \xi$ for any $\xi \in K$. That is, the involution is the identity, and the $\star$-sfield is commutative.
\end{proof}

We continue by showing that $K$ can be endowed with an ordering to the effect that the quadratic space $H$ becomes positive definite. We refer to \cite[\S 1]{Pre} for further information on the topic of fields and orderings.

\begin{lemma} \label{lem:K-is-ordered}
$K$ is a formally real field. W.r.t.~any order on $K$, the hermitian form on $H$ is positive definite.
\end{lemma}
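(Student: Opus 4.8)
The plan is to exploit the concrete description of the basic circulations obtained in Lemma~\ref{lem:K-is-field}. From that proof we already know that, for any two-dimensional subspace $T$ with orthonormal basis $b_1, b_2$ and any basic circulation group $\{\kappa_t \colon t \in \CircleGroup\}$ of $P(T)$, the inducing unitary operators have the form
\[ U_t \;=\; \begin{smm} \alpha & -\beta^\star \\ \beta & \alpha^\star \end{smm} \;=\; \begin{smm} \alpha & -\beta \\ \beta & \alpha \end{smm}, \]
where $\alpha^\star = \alpha$, $\beta^\star = \beta$ (using that the involution is now trivial) and $\alpha^2 + \beta^2 = \alpha\alpha^\star + \beta\beta^\star = 1$. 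Moreover, by the transitivity clause, as $t$ ranges over $\CircleGroup$ the ratio $\beta^{-1}\alpha$ (when $\beta \neq 0$) ranges over all of $K$; equivalently, every element of $K$ occurs as $\alpha/\beta$ for a suitable such matrix. The idea is that these matrices behave exactly like the rotation matrices of $\mathrm{SO}(2)$, and the one-parameter-group structure, together with transitivity, forces $K$ to admit enough sums of squares to be $1$ to be formally real.

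First I would record the key algebraic identity: for each $t$ there is a pair $(\alpha, \beta) \in K^2$ with $\alpha^2 + \beta^2 = 1$, and conversely every $\xi \in K$ is of the form $\alpha/\beta$ for such a pair with $\beta \neq 0$. Given an arbitrary $\xi \in K$, choose such a pair; then $\alpha = \xi\beta$, so $\xi^2\beta^2 + \beta^2 = 1$, i.e.\ $(\xi^2 + 1)\beta^2 = 1$, whence $\xi^2 + 1 = \beta^{-2} = (\beta^{-1})^2$ is a nonzero square in $K$. Thus $1 + \xi^2$ is a nonzero square for every $\xi \in K$. Iterating, any sum $1 + \xi_1^2 + \dots + \xi_k^2$ is obtained by repeatedly adding $1$: since $1 + \eta^2$ is a square for every $\eta$, and a square plus $1$ — write the square as $\eta^2$ — is again a nonzero square, an easy induction shows every sum of squares $\xi_1^2 + \dots + \xi_k^2$ with at least one $\xi_i \neq 0$ (equivalently, every nonzero sum of squares, after factoring out) is a nonzero square; in particular $-1$ is \emph{not} a sum of squares in $K$, since $-1$ is never a square (if $-1 = \delta^2$ then $0 = 1 + \delta^2$ would be a nonzero square, absurd). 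By the Artin–Schreier criterion \cite[\S 1]{Pre}, a field in which $-1$ is not a sum of squares is formally real, so $K$ can be ordered.

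Next I would show positive definiteness with respect to \emph{any} order $\leq$ on $K$. Let $u \in H \setminus \{0\}$; since each one-dimensional subspace contains a unit vector, we may scale and assume $\herm u u \in \{\lambda^2 \colon \lambda \in K\} \cdot \herm u u$, or more directly: pick a unit vector $w$ with $\lin w = \lin u$, so $\herm w w = 1 > 0$, and since $\herm u u = \mu \mu^\star \herm w w = \mu^2$ for the scalar $\mu$ with $u = \mu w$, we get $\herm u u = \mu^2 > 0$ (a nonzero square is positive in any order). Hence the form is positive definite. This disposes of the second sentence of the statement.

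The main obstacle is the inductive step identifying sums of squares with squares: one must be slightly careful because the identity $1 + \xi^2 = (\text{square})$ only directly handles adding $1$, not adding an arbitrary square. The clean way around this is to observe that it suffices to prove $-1$ is not a sum of squares, and for that one only needs: (i) $-1$ is not itself a square, and (ii) if $s$ is a sum of $k$ squares then $1 + s$ is a sum of $k$ squares again — but this second point is immediate from $1 + \xi^2$ being a square, applied after writing $s$ as a square is \emph{not} automatic, so instead one argues directly that $-1 = \xi_1^2 + \dots + \xi_k^2$ would give, after dividing by $\xi_k^2$ (assuming $\xi_k \neq 0$) and rearranging, an expression $-(1 + \zeta_1^2 + \dots + \zeta_{k-1}^2) = \eta^2$ with $\eta = \xi_k^{-1}$... this still needs the sum-of-squares-is-square fact. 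The truly robust route, which I would actually take, is: prove by induction on $k$ that $1 + \xi_1^2 + \dots + \xi_k^2$ is a nonzero square, the base case $k=0$ being trivial and the step using that a nonzero square $\eta^2$ satisfies $\eta^2(1 + (\eta^{-1}\xi_{k+1})^2) = \eta^2 + \xi_{k+1}^2$, and that $1 + (\eta^{-1}\xi_{k+1})^2$ is a square by the basic identity, so the product of two squares is a square. That closes the induction and yields formal reality; everything else is routine.
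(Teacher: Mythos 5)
Your proof is correct, but it takes a genuinely different route from the paper's. The paper argues geometrically: it sets $S_K$ to be the set of sums of squares, supposes $\alpha_1^2+\dots+\alpha_k^2=0$ with the $\alpha_i$ nonzero, and builds vectors $v_1=\alpha_1u$, $v_{i+1}=v_i+\alpha_{i+1}u'$ (with $u'$ a unit vector orthogonal to $v_i$) whose lengths are the partial sums; the final vector is a nonzero isotropic vector, contradicting anisotropy, whence $S_K\cap -S_K=\{0\}$ and Prestel's extension theorem gives an order. You instead stay inside the algebra of the basic circulation group: transitivity forces every $\xi\in K$ to occur as $\alpha/\beta$ with $\alpha^2+\beta^2=1$ and $\beta\neq 0$, so $1+\xi^2=\beta^{-2}$ is a nonzero square; your induction (multiplying out $\eta^2+\xi^2=\eta^2\bigl(1+(\eta^{-1}\xi)^2\bigr)$) then shows $1+(\text{any sum of squares})$ is a nonzero square, so $-1$ is not a sum of squares and Artin--Schreier applies. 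Both reductions to a standard criterion from \cite{Pre} are legitimate, and your treatment of positive definiteness coincides with the paper's. What your route buys, beyond a proof that uses only the two-dimensional picture already set up in Lemma~\ref{lem:K-is-field}, is the stronger byproduct that $K$ is Pythagorean (any sum of two squares $\gamma^2+\delta^2=\delta^2\bigl(1+(\gamma\delta^{-1})^2\bigr)$ is a square) -- a property the paper later imposes as a separate hypothesis in Proposition~\ref{prop:circulation-group-2}, and which your argument shows already follows from (R$_1$). What the paper's route buys is independence from the explicit matrix computations: it needs only anisotropy and the existence of unit vectors, so it would survive even if the description of the circulations were weakened. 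The only blemish in your write-up is the meandering middle paragraph about the ``main obstacle''; the final induction you settle on is the right one and closes the gap you were worried about.
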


\begin{proof}
Let
\[ S_K \;=\;
       \{ \alpha_1^2 + \ldots + \alpha_k^2 \colon
       \alpha_1, \ldots, \alpha_k \in K, \; k \geq 0 \} \]
and note that, if $K$ admits an order, then all elements of $S_K$ will be positive. We shall show that $S_K \cap -S_K = \{0\}$; it then follows that $S_K$ can be extended to a positive cone determining an order that makes $K$ into an ordered field; see, e.g., \cite[Theorem~(1.8)]{Pre}.

Assume to the contrary that $S_K \cap -S_K$ contains a non-zero element. Then there are $\alpha_1, \ldots, \alpha_k \in K \setminus \{0\}$, $k \geq 1$, such that $\alpha_1^2 + \ldots + \alpha_k^2 = 0$.

It follows that that there are non-zero vectors $v_1, \ldots, v_k$ such that $\herm{v_i}{v_i} = \alpha_1^2 + \ldots + \alpha_i^2$, $i = 1, \ldots, k$. Indeed, let $u$ be any unit vector. Then $v_1 = \alpha_1 u$ is non-zero and of length $\alpha_1^2$. Moreover, let $1 \leq i < k$ and assume that $v_i$ is non-zero and of length $\alpha_1^2 + \ldots + \alpha_i^2$. Let $u'$ be a unit vector orthogonal to $v_i$. Then $v_{i+1} = v_i + \alpha_{i+1} u'$ is again non-zero and has length $\alpha_1^2 + \ldots + \alpha_{i+1}^2$.

We conclude that, in particular, there is non-zero vector $v_k$ that has length $\alpha_1^2 + \ldots + \alpha_k^2 = 0$. But this contradicts the anisotropy of the form.

To show also the second assertion, let us fix an order of $K$ and let $v \in H\withoutzero$. Then there is a unit vector $u \in H$ and an $\alpha \in K$ such that $v = \alpha u$. It follows $\herm v v = \herm{\alpha u}{\alpha u} = \alpha^2 > 0$.
\end{proof}

We summarise what we have shown.

\begin{theorem} \label{thm:ordered-field}
Let $(X, \perp)$ be a linear orthogonality space of finite rank $\geq 4$ that fulfils {\rm (R$_1$)}. Then there is an ordered field $K$ and a positive-definite quadratic space $H$ over $K$, possessing unit vectors in each one-dimensional subspace, such that $(X, \perp)$ is isomorphic to $(P(H), \perp)$.
\end{theorem}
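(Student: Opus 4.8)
The plan is simply to assemble the lemmas already established in this section. First I would invoke Lemma~\ref{lem:unit-vectors}: since $(X,\perp)$ is linear, of rank $\geq 4$, and satisfies (R$_1$), there is a Hermitian space $H$ over some $\star$-sfield $K$ with $(X,\perp) \cong (P(H),\perp)$ and with the property that every one-dimensional subspace contains a unit vector. Transporting the relation $\perp$, and hence condition (R$_1$), along this isomorphism, I may assume that $(P(H),\perp)$ itself fulfils (R$_1$) --- this is precisely the standing hypothesis under which Lemmas~\ref{lem:circulation-group}, \ref{lem:K-is-field} and \ref{lem:K-is-ordered} were proved.

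Next I would apply Lemma~\ref{lem:K-is-field}, which gives that $K$ is commutative and that the involution $^\star$ is the identity; hence $H$ is a quadratic space. Then Lemma~\ref{lem:K-is-ordered} yields that $K$ is formally real, so I may fix an order on $K$, and the same lemma tells me that with respect to this (indeed any) order the form on $H$ is positive definite. Collecting these facts, $K$ is an ordered field, $H$ is a positive-definite quadratic space over $K$ possessing unit vectors in each one-dimensional subspace, and $(X,\perp) \cong (P(H),\perp)$, which is exactly the assertion.

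Because all the substantive work has already been carried out, the theorem is essentially a restatement and presents no real obstacle. Were one to prove the whole chain from scratch, the hard part would be Lemma~\ref{lem:K-is-field}: extracting from the matrix identities forced by a transitive and commutative basic circulation group that the distinguished element $\gamma$ lies in $Z(K)$ and that the entries $\alpha,\beta$ of an arbitrary $U_t$ satisfy $\alpha = \alpha^\star$ and $\beta = \beta^\star$, whence the involution is trivial. By comparison the formal-reality step in Lemma~\ref{lem:K-is-ordered} is soft, using only anisotropy together with the availability of unit vectors to preclude a non-trivial sum of squares from vanishing.
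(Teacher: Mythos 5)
Your proposal is correct and matches the paper exactly: the paper presents this theorem as a summary of what was just shown, with the content residing in Lemma~\ref{lem:unit-vectors}, Lemma~\ref{lem:K-is-field} and Lemma~\ref{lem:K-is-ordered}, assembled in precisely the order you describe. Your remark about transporting (R$_1$) along the isomorphism so that the standing hypothesis of those lemmas applies is the only glue needed, and you supply it.
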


We conclude the section with a comment on the formulation of our condition (R$_1$).

\begin{remark}
For the proof of Theorem \ref{thm:ordered-field}, we have not made use of the the divisibility condition in {\rm (R$_1$)}, which hence could be dropped. So far, only Lemma \ref{lem:irreducibility-from-CG}, which we did not use in the sequel, has depended on the divisibility.

We think, however, that it is natural to include this property as it well reflects the idea of gradual transitions between pairs of elements of an orthogonality space. Furthermore, omitting divisibility would be especially interesting if $\CircleGroup$ could possibly be finite. But this is not the case. Indeed, the field of scalars $K$ of the representing linear space has characteristic $0$ and hence each two-dimensional subspace contains infinitely many one-dimensional subspaces. Hence $\CircleGroup$ is necessarily infinite and thus anyhow ``dense'' in $\Reals/2\pi\Integers$.
\end{remark}

\section{The circulation group}
\label{sec:circulation-group}

We have established that linear orthogonality spaces of rank at least $4$ arise from positive definite quadratic spaces in case condition (R$_1$) is fulfilled. We insert a short discussion of the symmetries that are required to exist as part of (R$_1$).

In this section, $H$ will be a positive definite quadratic space over an ordered field $K$ such that $H$ is of finite dimension $\geq 4$, each one-dimensional subspace contains a unit vector, and $(P(H), \perp)$ fulfils (R$_1$). For further information on quadratic spaces, we may refer, e.g., to \cite{Sch}.

In accordance with the common practice, we call the unitary operators of $H$ from now on {\it orthogonal} and we denote the group of orthogonal operators by $\O(H)$. Furthermore, with any endomorphism $A$ of $H$ we may associate its determinant $\det A$. For an orthogonal operator $U$, we have $\det U \in \{1,-1\}$ and we call $U$ a {\it rotation} if $\det U = 1$. The group of rotations is denoted by $\SO(H)$. For a two-dimensional subspace $T$ of $H$, we call $U \in \SO(H)$ a {\it basic rotation} in $T$ if $U|_{T\c}$ is the identity, and we denote the group of basic rotations in $T$ by $\SO(T,H)$.

As should be expected, the basic circulations correspond to the basic rotations.

\begin{proposition} \label{prop:circulation-group-1}
Let $T$ be a two-dimensional subspace of $H$ and let $C$ be a basic circulation group of $P(T)$. Then $C = \{ \phi_U \colon U \in \SO(T,H) \}$ and the map $\SO(T,H) \to C \komma U \mapsto \phi_U$ is an isomorphism.

In particular, there is a unique basic circulation group of $P(T)$. Moreover, any two basic circulation groups are isomorphic.
\end{proposition}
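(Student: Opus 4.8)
The plan is to pull the basic circulations back to orthogonal operators by means of Lemma~\ref{lem:circulation-group} and then to identify the resulting subgroup of $\O(H)$ with $\SO(T,H)$. Fix a basic circulation group $C = \{ \kappa_t \colon t \in \CircleGroup \}$ of $P(T)$. By Lemma~\ref{lem:circulation-group}, each $\kappa_t$ equals $\phi_{U_t}$ for a unique orthogonal operator $U_t$ with $U_t|_{T\c} = I$, and $t \mapsto U_t$ is an injective homomorphism; hence $M := \{ U_t \colon t \in \CircleGroup \}$ is a subgroup of $\O(H)$, isomorphic to $\CircleGroup$, whose elements fix $T\c$ pointwise, and, since by~($\alpha$) the group $C$ acts transitively on $P(T)$, $M$ acts transitively on the one-dimensional subspaces of $T$. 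The whole statement will follow once I show $M = \SO(T,H)$. Indeed, by Theorem~\ref{thm:Wigner-for-OS} and Lemma~\ref{lem:K-is-field} the kernel of $U \mapsto \phi_U$ is the group $\{ I, -I \}$ of scalar operators on $H$, and it meets $\SO(T,H)$ trivially, since members of $\SO(T,H)$ fix the non-trivial subspace $T\c$ whereas $-I$ does not; so $U \mapsto \phi_U$ will restrict to an isomorphism from $\SO(T,H)$ onto $\{ \phi_U \colon U \in \SO(T,H) \}$, and once $M = \SO(T,H)$ is known this image is exactly $C$.

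The inclusion $M \subseteq \SO(T,H)$ is the easy half. It is immediate from~\eqref{fml:Ut2}, which shows that the restriction of each $U_t$ to $T$, written with respect to an orthonormal basis, has the form $\begin{smm} \alpha & -\beta \\ \beta & \alpha \end{smm}$; alternatively, $t \mapsto \det U_t$ is a homomorphism from the divisible group $\CircleGroup$ onto a divisible, hence trivial, subgroup of $\{1,-1\}$. For the reverse inclusion I would fix a unit vector $b_1 \in T$ — available by hypothesis — and a unit vector $b_2 \in T \cap \lin{b_1}\c$, so that $b_1, b_2$ is an orthonormal basis of $T$; identifying operators of $\SO(T,H)$ with their restrictions to $T$ and representing these by $2 \times 2$ matrices over $K$ as in the proof of Lemma~\ref{lem:K-is-field}, the group $\SO(T,H)$ becomes $\{ \begin{smm} a & -b \\ b & a \end{smm} \colon a,b \in K,\; a^2 + b^2 = 1 \}$. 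The crucial step is to place the quarter-turn rotation $J = \begin{smm} 0 & -1 \\ 1 & 0 \end{smm}$, or its inverse, inside $M$: by transitivity some $U \in M$ carries $\lin{b_1}$ to $\lin{b_2}$, which forces $U|_T \in \{ J, J^{-1} \}$, and in either case $(U|_T)^2 = -I$; hence the operator that is $-I$ on $T$ and the identity on $T\c$, namely $U^2$, belongs to $M$. Given now an arbitrary $V \in \SO(T,H)$, I would choose $U \in M$ with $U(\lin{b_1}) = V(\lin{b_1})$; then $U(b_1) = \pm V(b_1)$, so $U^{-1}V$ fixes $\lin{b_1}$, and a rotation of $T$ fixing a one-dimensional subspace restricts to $\pm I$ on $T$. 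Since both possibilities for $U^{-1}V$ lie in $M$, we get $V \in M$, and therefore $M = \SO(T,H)$ and $C = \{ \phi_U \colon U \in \SO(T,H) \}$.

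The remaining two assertions are then short. The description $C = \{ \phi_U \colon U \in \SO(T,H) \}$ makes no reference to the chosen homomorphism $\kappa$, so $P(T)$ has exactly one basic circulation group. If $T'$ is another two-dimensional subspace of $H$, I would extend orthonormal bases of $T$ and of $T'$ to orthonormal bases of $H$ — possible because every one-dimensional subspace contains a unit vector — and take a unitary operator $W$ with $W(T) = T'$; then conjugation by $W$ maps $\SO(T,H)$ isomorphically onto $\SO(T',H)$, whence the basic circulation groups of $P(T)$ and $P(T')$ are isomorphic.

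The main obstacle is the identity $M = \SO(T,H)$, and within it the step putting $J$ (or $J^{-1}$) into $M$: this is where transitivity of a basic circulation group is genuinely exploited, through $J^2 = -I$, and it is what upgrades ``a transitive subgroup of $\SO(T,H)$'' to ``all of $\SO(T,H)$''. By comparison, the containment $M \subseteq \SO(T,H)$ and the injectivity of $U \mapsto \phi_U$ on $\SO(T,H)$ are routine, resting only on divisibility of $\CircleGroup$ (or on~\eqref{fml:Ut2}) and on the computation of the kernel in Theorem~\ref{thm:Wigner-for-OS} together with Lemma~\ref{lem:K-is-field}.
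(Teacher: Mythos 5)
Your proof is correct and follows essentially the same route as the paper: pull the basic circulations back to orthogonal operators via Lemma~\ref{lem:circulation-group}, get $M \subseteq \SO(T,H)$ from divisibility of $\CircleGroup$ (the paper likewise writes $U_t = (U_{t/2})^2$), and use transitivity together with the operator acting as $-I$ on $T$ and as the identity on $T\c$ to obtain the reverse inclusion. The only cosmetic difference is that you produce that operator as the square of the quarter-turn supplied by transitivity, whereas the paper exhibits it as $U_\pi$ using ${U_\pi}^2 = U_0$ and injectivity; both work, and yours has the mild advantage of not presupposing $\pi \in \CircleGroup$.
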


\begin{proof}
In accordance with Lemma \ref{lem:circulation-group}, let $\{ U_t \colon t \in \CircleGroup \}$ be the subgroup of $O(H)$ such that $C = \{ \phi_{U_t} \colon t \in \CircleGroup \}$. We have to show that $\{ U_t \colon t \in \CircleGroup \}$ coincides with $\SO(T,H)$.

As for any $t \in \CircleGroup$ we have $U_t = (U_{\frac t 2})^2$, it is clear that $U_t \in \SO(T,H)$. Conversely, let $U \in \SO(T,H)$. We again fix an orthonormal basis of $T$ and identify the operators in question with the matrix representation of their restriction to $T$. Then we have $U = \begin{smm} \alpha & -\beta \\ \beta & \alpha \end{smm}$ for some $\alpha, \beta \in K$ such that $\alpha^2 + \beta^2 = 1$. As $C$ acts transitively on $P(T)$, there is a $t \in \CircleGroup$ such that $U_t(\begin{smm} 1 \\ 0 \end{smm}) \in \lin{\begin{smm} \alpha \\ \beta \end{smm}}$. This means that $U_t$ equals one of
\[ \begin{smm} \alpha & -\beta \\ \beta & \alpha \end{smm} \text{ or }
\begin{smm} -\alpha & \beta \\ -\beta & -\alpha \end{smm}. \]
Furthermore, we have $U_0 = \begin{smm} 1 & 0 \\ 0 & 1 \end{smm}$ and from ${U_\pi}^2 = U_0$ it follows that $U_\pi = U_0$ or $U_\pi = -U_0$. Since by the injectivity requirement in (R$_1$) the first possibility cannot apply, we have $U_\pi = -U_0 = \begin{smm} -1 & 0 \\ 0 & -1 \end{smm}$. Hence either $U = U_t$ or $U = U_t U_\pi = U_{t+\pi}$. The assertion follows and we conclude that $C = \{ \phi_U \colon U \in \SO(T,H) \}$.

By Lemma \ref{lem:circulation-group}, we thus have the isomorphism $\CircleGroup \to \SO(T,H) \komma t \mapsto U_t$. Moreover, $\CircleGroup \to C \komma t \mapsto \kappa_t$ is an isomorphism, and $\kappa_t = \phi_{U_t}$ for any $t \in \CircleGroup$. We conclude that $\SO(T,H) \to C \komma U \mapsto \phi_U$ is an isomorphism.

The first part as well as the uniqueness assertion is shown. Finally, any two groups $\SO(T,H)$ and $\SO(T',H)$, where $T$ and $T'$ are $2$-dimensional subspaces of $H$, are isomorphic, hence the final assertion follows as well.
\end{proof}

Given a line $L$ in $(P(H),\perp)$, we can speak, in view of Proposition \ref{prop:circulation-group-1}, of {\it the} basic circulation group of $L$. We should note however that, in contrast to the statements on uniqueness and isomorphy in Proposition \ref{prop:circulation-group-1}, the homomorphism from a subgroup $\CircleGroup$ of $\Reals/2\pi\Integers$ to a basic circulation group is not uniquely determined. Indeed, the group $\CircleGroup$ may possess an abundance of automorphisms, as is the case, e.g., for $\CircleGroup = \Reals/2\pi\Integers$.

In Proposition \ref{prop:circulation-group-1}, we have characterised the basic circulation groups as subgroups of $\SO(H)$. We may do so also with respect to the orthogonality space itself.

\begin{lemma} \label{lem:circulation-group-3}
Let $L \subseteq P(H)$ be a line. Then the basic circulation group of $L$ consists of all automorphisms $\phi$ of $(P(H),\perp)$ such that $\phi|_{L\c}$ is the identity and $\phi|_L$ is either the identity or does not possess any fixpoint.
\end{lemma}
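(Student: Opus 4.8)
The plan is to prove the two inclusions separately. Let $C$ denote the basic circulation group of $L = P(T)$, where $T$ is the two-dimensional subspace with $P(T) = L$.

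For the inclusion $\subseteq$: take $\phi = \phi_U \in C$ with $U \in \SO(T,H)$ (using Proposition \ref{prop:circulation-group-1}). By condition ($\beta$) of (R$_1$), $\phi$ is the identity on $L\c$. For the restriction to $L$, I would fix an orthonormal basis of $T$ and write $U|_T = \begin{smm} \alpha & -\beta \\ \beta & \alpha \end{smm}$ with $\alpha^2 + \beta^2 = 1$. If $\beta = 0$ then $\alpha = \pm 1$, so $U|_T = \pm I$ and $\phi|_L$ is the identity. If $\beta \neq 0$, I claim $\phi|_L$ has no fixpoint: a fixpoint would be a one-dimensional subspace $\lin{v}$ of $T$ with $U(v) \in \lin v$, i.e.\ an eigenvector of $U|_T$ in $K^2$; but the characteristic polynomial $\lambda^2 - 2\alpha\lambda + 1$ has discriminant $4(\alpha^2-1) = -4\beta^2 < 0$ in the ordered field $K$, so there is no eigenvalue in $K$ and hence no eigenvector.

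For the inclusion $\supseteq$: suppose $\phi$ is an automorphism of $(P(H),\perp)$ with $\phi|_{L\c}$ the identity and $\phi|_L$ either the identity or fixpoint-free. Since $L\c = P(T\c)$ spans the at-least-two-dimensional subspace $T\c$ (here I use $\dim H \geq 4$, so $\dim T\c \geq 2$), Theorem \ref{thm:Wigner-for-OS} gives a unique orthogonal operator $U$ with $\phi = \phi_U$ and $U|_{T\c}$ the identity. Then $U$ leaves $T = (T\c)\c$ invariant, so $U \in \O(T,H)$ and $\det(U) = \det(U|_T)$. I must show $U \in \SO(T,H)$, i.e.\ $\det(U|_T) = 1$; then Proposition \ref{prop:circulation-group-1} finishes the argument. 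The operator $U|_T$ is an orthogonal $2\times 2$ matrix over $K$, hence either a rotation $\begin{smm} \alpha & -\beta \\ \beta & \alpha \end{smm}$ (determinant $1$) or a reflection $\begin{smm} \alpha & \beta \\ \beta & -\alpha \end{smm}$ (determinant $-1$), with $\alpha^2+\beta^2=1$ in each case. In the rotation case we are done. In the reflection case, the characteristic polynomial is $\lambda^2 - 1$, which factors over $K$; this produces eigenvectors in $K^2$, hence fixpoints of $\phi|_L$ in $L$. Since a reflection is not $\pm I$ (its trace is $0$ but it is not the zero matrix, and $\operatorname{char} K = 0$ rules out $2\alpha = 0$ with $\alpha^2 = 1$... more simply, $-I$ has determinant $1$), $\phi|_L$ is neither the identity nor fixpoint-free, contradicting the hypothesis. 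Hence $U|_T$ is a rotation and $\phi \in C$.

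The main obstacle is the bookkeeping in the second inclusion: making sure that an arbitrary automorphism fixing $L\c$ pointwise is genuinely induced by an orthogonal operator fixing $T\c$ pointwise (which needs $\dim T\c \geq 2$ and Theorem \ref{thm:Wigner-for-OS}), and then translating the two mutually exclusive alternatives ``identity or fixpoint-free on $L$'' into the determinant dichotomy for the $2\times 2$ block. The key point making everything work is that $K$ is ordered (Theorem \ref{thm:ordered-field}): over an ordered field a rotation by a nonzero angle has no real eigenvalue, which is exactly why fixpoint-freeness on $L$ characterises the nontrivial rotations and excludes the reflections.
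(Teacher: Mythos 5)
Your proposal is correct and follows essentially the same route as the paper's proof: both directions reduce via Proposition \ref{prop:circulation-group-1} and Theorem \ref{thm:Wigner-for-OS} to the $2\times 2$ matrix dichotomy between rotations (no eigenvector over the ordered $K$ when $\beta \neq 0$) and reflections (eigenvalues $\pm 1$, hence fixpoints without being the identity). Your added details --- the discriminant computation and the explicit use of $\dim T\c \geq 2$ --- are correct elaborations of steps the paper leaves implicit.
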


\begin{proof}
Let $C$ be the basic circulation group of $L$, and let $T$ be the $2$-dimensional subspace of $H$ such that $L = P(T)$.

Let $\phi \in C$. By Proposition \ref{prop:circulation-group-1}, $\phi$ is induced by some $U \in \SO(T,H)$. Then $U|_{T\c}$ is the identity and, w.r.t.~an orthonormal basis of $T$, we have $U|_T = \begin{smm} \alpha & -\beta \\ \beta & \alpha \end{smm}$, where $\alpha, \beta \in K$ are such that $\alpha^2 + \beta^2 = 1$. If $\beta = 0$, then $\alpha = 1$ or $\alpha = -1$ and hence $U|_T$ induces the identity on $P(T)$. If $\beta \neq 0$, $U|_T$ does not possess any eigenvector and hence $U|_T$ induces on $P(T)$ a map without fixpoints.

Conversely, let $\phi$ be an automorphism of $P(H)$ such that $\phi|_{L\c}$ is the identity and $\phi|_L$ is either the identity or does not possess any fixpoint. By Theorem \ref{thm:Wigner-for-OS}, $\phi$ is induced by an orthogonal operator $U$ such that $U|_{T\c}$ is the identity. W.r.t.~an orthonormal basis of $T$, $U|_T$ is of the form $\begin{smm} \alpha & -\beta \\ \beta & \alpha \end{smm}$ or $\begin{smm} \alpha & \beta \\ \beta & -\alpha \end{smm}$, where $\alpha^2 + \beta^2 = 1$. In the latter case, $U|_T$ has the distinct eigenvalues $1$ and $-1$, hence $\phi|_L$ has exactly two fixpoints. We conclude that $U|_T$ is of the form of the first matrix and hence $U \in \SO(T,H)$. By Proposition \ref{prop:circulation-group-1}, $\phi = \phi_U$ belongs to $C$.
\end{proof}

It seems finally natural to ask how $\Circ{P(H),\perp}$ is related to $\SO(H)$. By Proposition \ref{prop:circulation-group-1}, we know that $\Circ{P(H),\perp} \subseteq \{ \phi_U \colon U \in \SO(H) \}$: any circulation is induced by a rotation. Under an additional assumption, we can make a more precise statement. We call a field {\it Pythagorean} if any sum of two squares is itself a square.

In what follows, $\PSO(H) = \SO(H)/(\{I, -I\} \cap \SO(H))$ is the projective special orthogonal group of $H$.

\begin{proposition} \label{prop:circulation-group-2}
Assume that $K$ is Pythagorean. Then we have $\Circ{P(H),\perp} = \{ \phi_U \colon U \in \SO(H) \}$. Furthermore, the map $\SO(H) \to \Circ{P(H),\perp} \komma U \mapsto \phi_U$ is a surjective homomorphism. Its kernel is $\{I, -I\} \cap \SO(H)$, hence it induces an isomorphism between $\PSO(H)$ and $\Circ{P(H),\perp}$.
\end{proposition}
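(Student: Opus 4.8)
The plan is to establish the three assertions in turn, using the factorisation of rotations into basic rotations as the main engine. First I would recall the classical fact that, over any Pythagorean ordered field, every element of $\SO(H)$ is a product of basic rotations: since $K$ is Pythagorean, the form on $H$ admits Gram--Schmidt orthonormalisation, and a rotation can be reduced to the identity by successively multiplying with basic rotations in suitable coordinate planes (this is the finite-dimensional analogue of the Cartan--Dieudonn\'e decomposition, valid here because every anisotropic vector of unit square exists and sums of squares are squares). Granting this, every $U \in \SO(H)$ factors as $U = U_1 \cdots U_r$ with each $U_j \in \SO(T_j,H)$ for some $2$-dimensional $T_j$. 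By Proposition~\ref{prop:circulation-group-1}, each $\phi_{U_j}$ lies in the basic circulation group of $P(T_j)$, hence $\phi_U = \phi_{U_1} \cdots \phi_{U_r} \in \Circ{P(H),\perp}$. Combined with the reverse inclusion $\Circ{P(H),\perp} \subseteq \{ \phi_U \colon U \in \SO(H) \}$ already noted in the text (every basic circulation is induced by a basic rotation, and products of rotations are rotations), this gives $\Circ{P(H),\perp} = \{ \phi_U \colon U \in \SO(H) \}$.

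Next I would treat the homomorphism $\SO(H) \to \Circ{P(H),\perp}$, $U \mapsto \phi_U$. That this is a homomorphism is immediate from $\phi_{UV} = \phi_U \phi_V$ (which follows directly from the definition~(\ref{fml:map-induced-by-unitary-operator})), and surjectivity is exactly the first assertion just proved. For the kernel, I would invoke Theorem~\ref{thm:Wigner-for-OS}: the kernel of $\U(H) \to \Aut{P(H),\perp}$ is $\{ \epsilon I \colon \epsilon \in Z(K) \cap U(K)\}$. Since $K$ is now a commutative ordered field with trivial involution (Lemmas~\ref{lem:K-is-field} and~\ref{lem:K-is-ordered}), we have $Z(K) = K$ and $U(K) = \{\epsilon \in K \colon \epsilon^2 = 1\} = \{1,-1\}$, so that kernel is $\{I,-I\}$. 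Intersecting with $\SO(H)$ gives $\{I,-I\} \cap \SO(H)$ as the kernel of $U \mapsto \phi_U$ restricted to $\SO(H)$. The first isomorphism theorem then yields $\PSO(H) = \SO(H)/(\{I,-I\}\cap\SO(H)) \cong \Circ{P(H),\perp}$, which is the last assertion.

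The main obstacle is the Cartan--Dieudonn\'e-type decomposition of $\SO(H)$ into basic rotations, and this is precisely where the Pythagorean hypothesis is needed. Over a general ordered field the orthogonal group of a positive definite form need not be generated by basic (plane) rotations, because reflections in anisotropic hyperplanes exist but orthonormal bases of arbitrary planes may fail to exist without the sum-of-squares-is-a-square property; one cannot "normalise" intermediate vectors. I would cite the standard reference (e.g.\ the treatment in Scherk or in the cited \cite{Sch}) for the statement that, over a Pythagorean field, $\SO(H)$ is generated by rotations supported on $2$-dimensional subspaces, and then the rest of the argument is the routine bookkeeping sketched above. A minor point to check is that the decomposition can genuinely be taken with each factor a \emph{rotation} (determinant $1$) on its plane, not merely an orthogonal map; this is automatic since a basic orthogonal operator with determinant $-1$ is a reflection, and an even product of such reflections regroups into plane rotations, while the overall determinant being $1$ forces the count to work out.
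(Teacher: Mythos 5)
Your proposal is correct and follows the same overall skeleton as the paper's proof: identify the basic circulation groups with the images of the subgroups $\SO(T,H)$ via Proposition~\ref{prop:circulation-group-1}, reduce everything to the claim that $\SO(H)$ is generated by the basic rotations, and read off the kernel from Theorem~\ref{thm:Wigner-for-OS} (your computation $Z(K) \cap U(K) = \{1,-1\}$ is exactly what the paper uses implicitly). The one genuinely different ingredient is how the generation claim is established. The paper gives a short self-contained argument: since $K$ is Pythagorean, for any $\gamma, \delta$ not both zero there is a Givens rotation sending $(\gamma,\delta)$ to $(\rho,0)$ with $\rho^2 = \gamma^2 + \delta^2$; left multiplication by such rotations brings any rotation matrix to row echelon, hence diagonal, form with an even number of entries $-1$, and such a diagonal matrix is again a product of basic rotations. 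You instead invoke a Cartan--Dieudonn\'e decomposition into hyperplane reflections and regroup consecutive pairs into plane rotations; this is also valid, since a product $s_u s_v$ of two reflections fixes $\{u,v\}^{\bot}$ pointwise and lies in $\SO(\lin{u,v},H)$, and determinant $1$ forces the number of reflections to be even. What each approach buys: the paper's argument is elementary and makes the role of the Pythagorean hypothesis completely explicit (it is precisely the solvability of $\rho^2 = \gamma^2 + \delta^2$), whereas your route leans on a classical theorem but, if carried out, does not actually need the hypothesis at the point where you locate it --- Cartan--Dieudonn\'e holds for anisotropic forms over any field of characteristic $\neq 2$, and no normalisation of intermediate vectors is required to recognise $s_u s_v$ as a basic rotation. (Under the standing assumption of Section~\ref{sec:circulation-group} that every one-dimensional subspace contains a unit vector, $K$ is in any case automatically Pythagorean, so your worry about missing orthonormal bases of planes is moot.) The ``minor point'' you defer at the end is the only sketchy spot, and the regrouping it describes does work.
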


\begin{proof}
By Theorem \ref{thm:Wigner-for-OS}, $\SO(H) \to \Aut{P(H),\perp} \komma U \mapsto \phi_U$ is a homomorphism, whose kernel is $\{I, -I\} \cap \SO(H)$. By Proposition \ref{prop:circulation-group-1}, the images of the subgroups $\SO(T,H)$ of $\SO(H)$, where $T$ are the $2$-dimensional subspaces, under this homomorphism are exactly the basic circulation groups.

We shall show that $\SO(H)$ is generated by the basic rotations. Since $\Circ{P(H), \perp}$ is by definition generated by the basic circulations, the assertions will then follow.

Note first that, for any elements $\gamma, \delta \in K$ that are not both $0$, there are $\alpha, \beta, \rho \in K$ such that $\alpha^2 + \beta^2 = 1$, $\;\rho \neq 0$, and
\[ \begin{smm} \alpha & -\beta \\ \beta & \alpha \end{smm} \;
\begin{smm} \gamma \\ \delta \end{smm} \;=\; \begin{smm} \rho \\ 0 \end{smm}. \]
Indeed, let $\rho^2 = \gamma^2 + \delta^2$, $\;\alpha = \frac{\gamma}{\rho}$, and $\beta = -\frac{\delta}{\rho}$.

It follows that any matrix in $K^{n \times n}$ can be transformed by left multiplication with Givens rotations into row echelon form. When doing so with a matrix representing a rotation, the resulting matrix must be diagonal, an even number of the diagonal entries being $-1$ and remaining ones being $1$. We conclude that each rotation is the product of basic rotations in $2$-dimensional subspaces spanned by the elements of any given basis.
\end{proof}

\section{Embedding into $\Reals^n$}
\label{sec:spaces-over-subfields-of-R}

Our final aim is to present a condition with the effect that our orthogonality space arises from a quadratic space over an Archimedean field. In order to exclude the existence of non-zero infinitesimal elements, we shall require that our orthogonality space is, in a certain sense, simple.

An equivalence relation $\congr$ on an orthogonality space $(X,\perp)$ is called a {\it congruence} if any two orthogonal elements belong to distinct $\congr$-classes. Obviously, $X$ possesses at least one congruence, the identity relation, which we call {\it trivial}. For a congruence $\congr$ on $X$, we can make $X/{\congr}$ into an orthogonality space, called the {\it quotient orthogonality space}: for $e, f \in X$, we let $e/{\congr} \perp f/{\congr}$ if there are $e' \congr e$ and $f' \congr f$ such that $e' \perp f'$.

Given an automorphism $\phi$ of $(X,\perp)$, we call a congruence $\congr$ {\it $\phi$-invariant} if, for any $e, f \in X$, we have that $e \congr f$ is equivalent to $\phi(e) \congr \phi(f)$. If $\congr$ is $\phi$-invariant for every member $\phi$ of a subgroup $G$ of $\Aut{X,\perp}$, we say that $\congr$ is {\it $G$-invariant}.

We consider the following condition on $(X,\perp)$:

\begin{itemize}

\item[(R$_2$)] $(X, \perp)$ does not possess a non-trivial $\Circ{X,\perp}$-invariant congruence.

\end{itemize}

\begin{example}
Let again $\Reals^n$, $n \geq 1$, be endowed with the usual inner product. By Proposition \ref{prop:circulation-group-2}, $\Circ{P(\Reals^n),\perp}$ consists exactly of those automorphisms of $(P(\Reals^n),\perp)$ that are induced by some $U \in \SO(n)$. Moreover, $\SO(n)$ acts primitively on $P(\Reals^n)$, that is, no non-trivial partition of $P(\Reals^n)$ is invariant under $\SO(n)$. This means that no non-trivial partition of $P(\Reals^n)$ is invariant under $\Circ{P(\Reals^n)}$. In particular, the only $\Circ{P(\Reals^n)}$-invariant congruence is the identity relation. We conclude that $(P(\Reals^n), \perp)$ fulfils {\rm (R$_2$)}.
\end{example}

Let $H$ be a positive definite quadratic space over the ordered field $K$ as in Section \ref{sec:circulation-group}, that is, we assume that $H$ is of finite dimension $\geq 4$, each one-dimensional subspace of $H$ contains a unit vector, and $(P(H), \perp)$ fulfils (R$_1$).

Following Holland \cite{Hol2}, we define
\begin{align*}
I_K \;=\; & \{ \alpha \in K \colon \abs{\alpha} < \tfrac 1 n \text{ for all $n \in \Naturals \setminus \{0\}$} \}, \\
M_K \;=\; & \{ \alpha \in K \colon \tfrac 1 n < \abs{\alpha} < n \text{ for some $n \in \Naturals \setminus \{0\} $} \}
\end{align*}
to be the sets of {\it infinitesimal} and {\it medial} elements of $K$, respectively. Then $I_K$ is an additive subgroup of $K$ closed under multiplication; $M_K$ is a multiplicative subgroup of $K\withoutzero$; and we have $I_K \cdot M_K = I_K$ and $M_K + I_K = M_K$.

We call $K$ {\it Archimedean} if the only infinitesimal element is $0$. We have that $K$ is Archimedean exactly if all non-zero elements are medial. The following result is due to Holland \cite{Hol1}.

\begin{theorem} \label{thm:Archimedean-ordered-fields}
An Archimedean ordered field is order-isomorphic to an ordered subfield of $\Reals$.
\end{theorem}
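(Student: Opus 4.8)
The statement to prove is the theorem that an Archimedean ordered field $K$ embeds (order-preservingly) into $\Reals$. The plan is to construct the embedding directly from the order. For each $\alpha \in K$, consider the set $D_\alpha = \{ q \in \Rationals \colon q < \alpha \}$, where $\Rationals$ is viewed inside $K$ via the canonical ring homomorphism $\Integers \to K$ (injective since $K$ has characteristic $0$, being ordered) extended to $\Rationals$. First I would check that $D_\alpha$ is a Dedekind cut of $\Rationals$: it is a proper, downward-closed, non-empty subset with no greatest element. Non-emptiness and properness both from below and above is exactly where the Archimedean hypothesis enters — given $\alpha \in K$, there is $n \in \Naturals$ with $-n < \alpha < n$, so $-n \in D_\alpha$ and $n \notin D_\alpha$. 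The "no greatest element" property follows because between $q$ and $\alpha$, if $q < \alpha$, one can fit a rational: this again uses the Archimedean property applied to $(\alpha - q)^{-1}$.

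Then I would define $\iota \colon K \to \Reals$ by sending $\alpha$ to the real number determined by the cut $D_\alpha$, i.e.\ $\iota(\alpha) = \sup D_\alpha$ computed in $\Reals$. The remaining work is to verify that $\iota$ is an order-preserving field homomorphism. Injectivity and order-preservation go together: if $\alpha < \beta$ in $K$, then by the density of $\Rationals$ in $K$ (Archimedean property again) there is $q \in \Rationals$ with $\alpha < q < \beta$, whence $q \in D_\beta \setminus D_\alpha$, so $D_\alpha \subsetneq D_\beta$ and $\iota(\alpha) < \iota(\beta)$; conversely the cut recovers the ordering. For additivity, one shows $D_{\alpha+\beta} = \{ q + r \colon q \in D_\alpha,\ r \in D_\beta \}$ up to the usual cut bookkeeping, and similarly multiplicativity is checked first for positive elements via $D_{\alpha\beta}$ and then extended by the sign rules; $\iota(1) = 1$ is immediate.

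The main obstacle I expect is purely organisational rather than conceptual: the verification that cut addition and cut multiplication match $\iota(\alpha+\beta)$ and $\iota(\alpha\beta)$ involves the standard, somewhat tedious case analysis familiar from the construction of $\Reals$ itself, with care needed around signs in the multiplicative case and around the fact that cuts are taken in $\Rationals$ but the arithmetic happens in both $K$ and $\Reals$. Every individual step is routine, and the only genuine input is that the Archimedean hypothesis guarantees (i)~$\Rationals$ sits cofinally and coinitially around every element of $K$, and (ii)~$\Rationals$ is order-dense in $K$; both are restatements of "no non-zero infinitesimals". An alternative, slicker route is to invoke the universal property of $\Reals$ as the Dedekind completion of $\Rationals$: $K$ is an ordered field extension of $\Rationals$ in which $\Rationals$ is dense, hence $K$ embeds into any Dedekind-complete ordered field containing $\Rationals$, namely $\Reals$. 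I would likely present the explicit cut construction for self-containedness but mention the completion viewpoint as the conceptual reason it works.
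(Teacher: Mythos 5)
Your proposal is correct, but there is nothing in the paper to compare it against: Theorem~\ref{thm:Archimedean-ordered-fields} is quoted as a known result, attributed to Holland \cite{Hol1}, and the paper supplies no proof of its own. Your Dedekind-cut construction is the standard self-contained argument for this classical fact, and every step you outline is sound: the Archimedean hypothesis is used exactly where you say it is (to make each $D_\alpha$ a bounded cut and to make $\Rationals$ order-dense in $K$), and injectivity of $\iota$ is in any case automatic once $\iota$ is a nonzero ring homomorphism out of a field. Two small points of bookkeeping you gloss over but should make explicit if writing this out: to get the \emph{strict} inequality $\iota(\alpha) < \iota(\beta)$ from $\alpha < \beta$ you need either two rationals between $\alpha$ and $\beta$ or an appeal to the fact that $D_\beta$ has no greatest element; and you should note that $\iota$ restricted to $\Rationals \subseteq K$ is the identity, which is what makes the image an ordered \emph{subfield} containing $\Rationals$ with the induced order. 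The alternative route you mention, via the universal property of $\Reals$ as the Dedekind completion of $\Rationals$, is the cleaner packaging of the same idea and would be an equally acceptable way to discharge the theorem; either way, what you buy over the paper's citation is self-containedness at the cost of the routine case analysis for cut addition and multiplication.
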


We shall show that condition (R$_2$) implies $K$ to be Archimedean. Following again \cite{Hol2}, we define
\begin{align*}
I_H \;=\; & \{ x \in H \colon \herm x x \in I_K \}, \\
M_H \;=\; & \{ x \in H \colon \herm x x \in M_K \}
\end{align*}
to be the set of {\it infinitesimal} and {\it medial} vectors, respectively. Then $I_H$ is a subgroup of $H$ and we have $I_K \cdot M_H = M_K \cdot I_H = I_H$, $\;I_K \cdot I_H \subseteq I_H$, and $M_K \cdot M_H = M_H$. Furthermore, the Schwarz inequality implies that $\herm x y \in I_K$ if $x, y \in I_H \cup M_H$ and at least one of $x$ and $y$ is infinitesimal.

Furthermore, for $\lin x, \lin y \in P(H)$, we put $\lin x \similar \lin y$ if there are medial vectors $x' \in \lin x$ and $y' \in \lin y$ such that $x' - y' \in I_H$.

\begin{lemma} \label{lem:K-is-Archimedean}
Assume that $(P(H), \perp)$ fulfils {\rm (R$_2$)}. Then $K$ is an ordered subfield of the ordered field $\Reals$.
\end{lemma}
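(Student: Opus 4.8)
The plan is to show that the relation $\similar$ on $P(H)$ is a $\Circ{P(H),\perp}$-invariant congruence, so that condition (R$_2$) forces it to be trivial, and then to argue that triviality of $\similar$ rules out the existence of a non-zero infinitesimal, whence $K$ is Archimedean and Theorem~\ref{thm:Archimedean-ordered-fields} applies.

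First I would verify that $\similar$ is an equivalence relation on $P(H)$. Reflexivity is clear since every one-dimensional subspace contains a unit vector, which is medial. Symmetry is immediate from the definition. For transitivity, suppose $\lin x \similar \lin y$ via medial representatives $x', y'$ with $x' - y' \in I_H$, and $\lin y \similar \lin z$ via medial $y'', z''$ with $y'' - z'' \in I_H$; since $y', y''$ span the same line, $y'' = \mu y'$ for some $\mu \in M_K$ (both being medial vectors on the same line, their quotient is medial), and then $z'' - \mu x' = (z'' - y'') + \mu(y' - x') \in I_H$ using $M_K \cdot I_H = I_H$, so $\lin x \similar \lin z$. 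Next I would check that $\similar$ is a congruence, i.e.\ that orthogonal lines lie in distinct classes: if $\lin x \perp \lin y$ with medial representatives and $x' - y' \in I_H$, then taking $x', y'$ unit vectors (possible since any line has a unit vector, and scaling a relation $x'-y'\in I_H$ by a medial factor keeps it in $I_H$ only on one side, so one must be slightly more careful) leads to $\herm{x' - y'}{x' - y'} = \herm{x'}{x'} + \herm{y'}{y'} = 2$ being infinitesimal, a contradiction; here I would use that $\herm{x'}{y'} \in I_K$ when $x' - y'$ is infinitesimal and $x', y'$ are medial, together with $\herm{x'-y'}{x'-y'} = \herm{x'}{x'} - \herm{x'}{y'} - \herm{y'}{x'} + \herm{y'}{y'}$, to see the left side is infinitesimal while the right side is a sum of two medial elements bounded below away from $0$.

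Then I would establish $\Circ{P(H),\perp}$-invariance. By Proposition~\ref{prop:circulation-group-1} every circulation is of the form $\phi_U$ for some $U \in \SO(H)$, and $U$, being orthogonal, preserves the quadratic form, hence maps medial vectors to medial vectors and infinitesimal vectors to infinitesimal vectors, and is additive; so $\lin x \similar \lin y$ implies $\lin{U x} \similar \lin{U y}$, and the same for $U^{-1}$, giving the equivalence $\lin x \similar \lin y \Leftrightarrow \phi_U(\lin x) \similar \phi_U(\lin y)$. By (R$_2$), $\similar$ is therefore the identity relation on $P(H)$.

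Finally I would derive that $K$ is Archimedean. Suppose $\alpha \in I_K \setminus \{0\}$. Fix an orthonormal pair $u, v$ in some two-dimensional subspace (unit vectors exist, and Gram--Schmidt works since the form is anisotropic and positive definite). Consider the vectors $u$ and $u + \alpha v$: both are medial (the second has length $1 + \alpha^2 \in M_K$), and their difference is $\alpha v \in I_H$ since $\herm{\alpha v}{\alpha v} = \alpha^2 \in I_K$. Hence $\lin u \similar \lin{u + \alpha v}$, so by triviality of $\similar$ we get $\lin u = \lin{u + \alpha v}$, forcing $\alpha v \in \lin u$, which is impossible for $\alpha \neq 0$ as $u \perp v$. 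Therefore $I_K = \{0\}$, i.e.\ $K$ is Archimedean, and Theorem~\ref{thm:Archimedean-ordered-fields} identifies $K$ with an ordered subfield of $\Reals$.

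The main obstacle I anticipate is the careful bookkeeping in the congruence and invariance steps: one must consistently track which side of a containment like $x' - y' \in I_H$ survives multiplication by a medial scalar (using $M_K \cdot I_H = I_H$ but not $I_H \cdot M_K$ symmetrically in a non-obvious way), and one must invoke the Schwarz-inequality consequence that cross terms $\herm x y$ become infinitesimal precisely when one argument is infinitesimal and the other medial. Everything else is routine once $\similar$ is set up correctly.
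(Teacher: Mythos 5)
Your proposal is correct and follows essentially the same route as the paper: show $\similar$ is a $\Circ{P(H),\perp}$-invariant congruence, invoke (R$_2$) to make it trivial, and contradict this with $\lin u \similar \lin{u+\delta v}$ for a non-zero infinitesimal $\delta$, before applying Theorem~\ref{thm:Archimedean-ordered-fields}. The only (harmless) deviation is in the congruence step, where you derive a contradiction from $\herm{x'-y'}{x'-y'}$ being simultaneously infinitesimal and medial, whereas the paper shows directly that $\herm{x'}{y'} \in M_K$ and hence is non-zero.
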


\begin{proof}
We first show that $\similar$ is an equivalence relation on $P(H)$. Clearly, $\similar$ is reflexive and symmetric. Let $x, y, z \in H$ be such that $\lin x \similar \lin y$ and $\lin y \similar \lin z$. Then there are $x' \in \lin x \cap M_H$, $\,y',y'' \in \lin y \cap M_H$, and $z' \in \lin z \cap M_H$ such that $x'-y', y''-z'' \in I_H$. Let $\alpha \in K$ be such that $y' = \alpha y''$. Then $\alpha^2 = \herm{y''}{y''}^{-1} \herm{y'}{y'} \in M_K$ and consequently $\alpha \in M_K$. Hence $\alpha z''$ is a medial vector as well, and $x' - \alpha z'' = (x'-y') + (y' - \alpha z'') = (x'-y') + \alpha (y''-z'') \in I_H$.

We claim that $\similar$ is a congruence. Let $x, y \in H\withoutzero$ be such that $\lin x \similar \lin y$. Then there are $x' \in \lin x \cap M_H$ and $y' \in \lin y \cap M_H$ such that $y' - x' \in I_H$. It follows $\herm{x'}{y'} = \herm{x'}{x' + (y'-x')} = \herm{x'}{x'} + \herm{x'}{y'-x'}$. Since $\herm{x'}{y'-x'} \in I_K$, we have $\herm{x'}{y'} \in M_K$. We have shown that $\lin x \not\perp \lin y$, because otherwise $\herm{x'}{y'} = 0$.

Let $\phi \in \Circ{P(H),\perp}$. Then $\phi$ is induced by an orthogonal operator $U$. For any $x, y \in H\withoutzero$, we have that $\lin x \similar \lin y$ implies $\lin{U(x)} \similar \lin{U(y)}$. Indeed, if $x' \in \lin x \cap M_H$ and $y' \in \lin y \cap M_H$ are such that $x' - y' \in I_H$, then also $U(x') \in \lin{U(x)} \cap M_H$ and $U(y') \in \lin{U(y)} \cap M_H$ are such that $U(x') - U(y') = U(x'-y') \in I_H$. We conclude that $\similar$ is $\phi$-invariant.

We have thus shown that $\similar$ is a $\Circ{P(H),\perp}$-invariant congruence on $P(H)$. By condition (R$_2$), $\similar$ is trivial.

Assume finally that $K$ contains the non-zero infinitesimal element $\delta$. For orthogonal unit vectors $u$ and $v$, we then have $\lin u \similar \lin{u+\delta v}$, because $u$ and $u+\delta v$ are medial vectors whose difference is infinitesimal. It follows that $\similar$ is non-trivial, a contradiction. We conclude that $K$ must be Archimedean.
\end{proof}

Again, we summarise our results.

\begin{theorem} \label{thm:main-representation}
Let $(X, \perp)$ a linear orthogonality space of finite rank $\geq 4$ that fulfils {\rm (R$_1$)} and {\rm (R$_2$)}. Then there is an ordered subfield $K$ of $\Reals$ and a positive-definite quadratic space $H$ over $K$, possessing unit vectors in each one-dimensional subspace, such that $(X, \perp)$ is isomorphic to $(P(H), \perp)$.
\end{theorem}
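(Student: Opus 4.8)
The statement is a direct summary of results already accumulated in the paper, so the plan is essentially to assemble the chain of implications rather than to prove anything new. First I would invoke Theorem~\ref{thm:ordered-field}: since $(X,\perp)$ is linear of finite rank $\geq 4$ and satisfies (R$_1$), there is an ordered field $K$ and a positive-definite quadratic space $H$ over $K$, with a unit vector in each one-dimensional subspace, such that $(X,\perp) \cong (P(H),\perp)$. This already delivers everything in the conclusion except the crucial strengthening that $K$ embeds into $\Reals$, so the remaining work is to bring in condition (R$_2$).

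The second step is to observe that the hypotheses of Section~\ref{sec:spaces-over-subfields-of-R} are met by the space $H$ produced above --- $H$ is a positive definite quadratic space of finite dimension $\geq 4$ over an ordered field, with unit vectors in each line, and $(P(H),\perp)$ satisfies (R$_1$). One then transports (R$_2$) across the isomorphism $(X,\perp)\cong(P(H),\perp)$. This requires only the remark that an isomorphism of orthogonality spaces induces a bijection of automorphism groups carrying $\Circ{X,\perp}$ onto $\Circ{P(H),\perp}$ (basic circulations are characterised purely in terms of their action on lines and their complements, cf.\ Lemma~\ref{lem:circulation-group-3}), hence a correspondence between $\Circ{X,\perp}$-invariant congruences on $X$ and $\Circ{P(H),\perp}$-invariant congruences on $P(H)$. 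Therefore $(P(H),\perp)$ also satisfies (R$_2$).

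The third step is to apply Lemma~\ref{lem:K-is-Archimedean}: since $(P(H),\perp)$ satisfies (R$_2$), the ordered field $K$ is Archimedean, and by Theorem~\ref{thm:Archimedean-ordered-fields} it is order-isomorphic to an ordered subfield of $\Reals$. Combining this with the structural data from Theorem~\ref{thm:ordered-field} yields exactly the assertion: $(X,\perp)$ is isomorphic to $(P(H),\perp)$ for a positive-definite quadratic space $H$ over an ordered subfield $K$ of $\Reals$, with unit vectors in every one-dimensional subspace.

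The only point requiring any genuine care --- and thus the ``main obstacle'', though it is a mild one --- is the transport of (R$_2$) along the isomorphism in the second step, since (R$_2$) is phrased in terms of the specific group $\Circ{X,\perp}$ rather than the full automorphism group; one must check that an isomorphism of orthogonality spaces respects the sub-groups of circulations, which follows from the intrinsic description of basic circulation groups given in Lemma~\ref{lem:circulation-group-3}. Everything else is a bookkeeping assembly of Theorems~\ref{thm:ordered-field} and~\ref{thm:Archimedean-ordered-fields} and Lemma~\ref{lem:K-is-Archimedean}.
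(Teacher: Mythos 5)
Your proposal is correct and follows exactly the route the paper intends: the paper offers no separate argument for this theorem beyond the phrase ``we summarise our results,'' the implicit proof being precisely the combination of Theorem~\ref{thm:ordered-field} with Lemma~\ref{lem:K-is-Archimedean} (via Theorem~\ref{thm:Archimedean-ordered-fields}). Your extra care in transporting (R$_2$) across the isomorphism $(X,\perp)\cong(P(H),\perp)$ using the intrinsic characterisation of circulations is a point the paper leaves tacit, and it is handled correctly.
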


\section{Conclusion}
\label{sec:conclusion}

Being based on a binary relation about which not more than symmetry and irreflexivity is assumed, an orthogonality space is based on the sole idea of distinguishability of some abstract entities. We have seen that rather simple conditions lead us to the realm of inner-product spaces. We have made a further hypothesis according to which an orthogonality space possesses enough symmetries to allow, intuitively, a gradual transition from one entity to any other one, in a way that unconcerned elements are kept fixed. We were led to a linear structure not too far from a real Hilbert space -- a positive-definite quadratic space over an ordered field. An additional condition had the effect that the ordered field is a subfield of~$\Reals$.

Improvements of this work are certainly possible in a number of respects. First to mention, it would be interesting to clarify to which extent the idea of postulating gradual transitions between any two elements alone allows a reasonable structure theory for orthogonality spaces. An attempt in this direction is contained in our note \cite{Vet3}, where, however, the concrete formulation of the central condition has led to technical subtleties.

Furthermore, our guiding example has in this work not been the standard model of quantum mechanics but rather a (finite-dimensional) real Hilbert space. A complex linear space can be understood as a real linear space endowed with a complex structure. We should hence ask whether an analogous extension could be defined for the orthogonality spaces that we have considered here, such that we are led to an inner-product space over a subfield of the field that in quantum physics actually matters.

\vspace{2ex}

{\bf Acknowledgement.} The author acknowledges the support by the Austrian
Science Fund (FWF): project I 4579-N and the Czech Science Foundation
(GA\v CR): project 20-09869L.

\end{document}